\documentclass[a4paper, USenglish, cleveref, autoref, thm-restate, numberwithinsect]{lipics-v2021}

\hideLIPIcs
\pdfoutput=1 %
\nolinenumbers %

\usepackage{tikz-cd}
\usepackage{listings}
\usepackage{upquote}
\usepackage{booktabs}
\usepackage{adjustbox}
\usepackage{newfloat}

\newcommand{\veryshortarrow}[1][3pt]{\mathrel{%
   \hbox{\rule[\dimexpr\fontdimen22\textfont2-.2pt\relax]{#1}{.4pt}}%
   \mkern-4mu\hbox{\usefont{U}{lasy}{m}{n}\symbol{41}}}}

\usepackage{color}
\definecolor{languagecolor}    {rgb}{0.5, 0.5, 0.5} %
\definecolor{coqkeywordcolor}  {rgb}{0.7, 0.1, 0.1} %
\definecolor{scalakeywordcolor}{rgb}{0.1, 0.1, 0.6} %
\definecolor{tacticcolor}      {rgb}{0.0, 0.1, 0.6} %
\definecolor{commentcolor}     {rgb}{0.4, 0.4, 0.4} %
\definecolor{sortcolor}        {rgb}{0.1, 0.5, 0.1} %
\definecolor{attributecolor}   {rgb}{0.7, 0.1, 0.1} %

\definecolor{symbolcolor}      {rgb}{0.0, 0.1, 0.6} %
\definecolor{nosymbolcolor}    {rgb}{0.0, 0.0, 0.0} %
\newcommand{\symbolcol}{\color{symbolcolor}}

\lstset{
backgroundcolor=\color{white}, %
frame=lines, rulecolor=\color{black!33!white}, framerule=0.1pt,
escapechar={~},   %
mathescape=false, %
texcl=false,      %
morekeywords=[2]{Sort, Type, Prop},
morekeywords=[3]{
assumption,
apply, intro, intros, allGoals,
generalize, clear, revert, done, exact,
refine, repeat, cases, rewrite, rw,
simp, simp_all, contradiction,
constructor, injection,
induction,
},
literate=
{α}{{\ensuremath{\mathrm{\alpha}}}}1
{β}{{\ensuremath{\mathrm{\beta}}}}1
{γ}{{\ensuremath{\mathrm{\gamma}}}}1
{δ}{{\ensuremath{\mathrm{\delta}}}}1
{ε}{{\ensuremath{\mathrm{\varepsilon}}}}1
{ζ}{{\ensuremath{\mathrm{\zeta}}}}1
{η}{{\ensuremath{\mathrm{\eta}}}}1
{θ}{{\ensuremath{\mathrm{\theta}}}}1
{ι}{{\ensuremath{\mathrm{\iota}}}}1
{κ}{{\ensuremath{\mathrm{\kappa}}}}1
{μ}{{\ensuremath{\mathrm{\mu}}}}1
{ν}{{\ensuremath{\mathrm{\nu}}}}1
{ξ}{{\ensuremath{\mathrm{\xi}}}}1
{π}{{\ensuremath{\mathrm{\mathnormal{\pi}}}}}1
{ρ}{{\ensuremath{\mathrm{\rho}}}}1
{σ}{{\ensuremath{\mathrm{\sigma}}}}1
{τ}{{\ensuremath{\mathrm{\tau}}}}1
{φ}{{\ensuremath{\mathrm{\varphi}}}}1
{χ}{{\ensuremath{\mathrm{\chi}}}}1
{ψ}{{\ensuremath{\mathrm{\psi}}}}1
{ω}{{\ensuremath{\mathrm{\omega}}}}1
{Γ}{{\ensuremath{\mathrm{\Gamma}}}}1
{Δ}{{\ensuremath{\mathrm{\Delta}}}}1
{Θ}{{\ensuremath{\mathrm{\Theta}}}}1
{Λ}{{\ensuremath{\mathrm{\Lambda}}}}1
{Σ}{{\ensuremath{\mathrm{\Sigma}}}}1
{Φ}{{\ensuremath{\mathrm{\Phi}}}}1
{Ξ}{{\ensuremath{\mathrm{\Xi}}}}1
{Ψ}{{\ensuremath{\mathrm{\Psi}}}}1
{Ω}{{\ensuremath{\mathrm{\Omega}}}}1
{ℵ}{{\ensuremath{\aleph}}}1
{≤}{{\ensuremath{\leq}}}1
{≥}{{\ensuremath{\geq}}}1
{≠}{{\ensuremath{\neq}}}1
{≈}{{\ensuremath{\approx}}}1
{≡}{{\ensuremath{\equiv}}}1
{≃}{{\ensuremath{\simeq}}}1
{<=}{{\ensuremath{\leq}}}2
{>=}{{\ensuremath{\geq}}}2
{∂}{{\ensuremath{\partial}}}1
{∆}{{\ensuremath{\triangle}}}1 %
{∫}{{\ensuremath{\int}}}1
{∑}{{\ensuremath{\mathrm{\Sigma}}}}1
{Π}{{\ensuremath{\mathrm{\Pi}}}}1
{⊥}{{\ensuremath{\perp}}}1
{∞}{{\ensuremath{\infty}}}1
{∂}{{\ensuremath{\partial}}}1
{∓}{{\ensuremath{\mp}}}1
{±}{{\ensuremath{\pm}}}1
{×}{{\ensuremath{\times}}}1
{⊕}{{\ensuremath{\oplus}}}1
{⊗}{{\ensuremath{\otimes}}}1
{⊞}{{\ensuremath{\boxplus}}}1
{∇}{{\ensuremath{\nabla}}}1
{√}{{\ensuremath{\sqrt}}}1
{⬝}{{\ensuremath{\cdot}}}1
{•}{{\ensuremath{\cdot}}}1
{∘}{{\ensuremath{\circ}}}1
{⁻}{{\ensuremath{^{-}}}}1
{▸}{{\ensuremath{\blacktriangleright}}}1
{∧}{{\ensuremath{\wedge}}}1
{∨}{{\ensuremath{\vee}}}1
{¬}{{\ensuremath{\neg}}}1
{⊢}{{\ensuremath{\vdash}}}1
{⟨}{{\ensuremath{\langle}}}1
{⟩}{{\ensuremath{\rangle}}}1
{↦}{{\ensuremath{\mapsto}}}1
{←}{{\symbolcol{}\ensuremath{\leftarrow}}}1
{<-}{{\symbolcol{}\ensuremath{\leftarrow}}}2
{→}{{\ensuremath{\veryshortarrow}}}1
{↔}{{\ensuremath{\leftrightarrow}}}1
{⟹}{{\ensuremath{\Longrightarrow}}}1
{⇐}{{\ensuremath{\Leftarrow}}}1
{⟸}{{\ensuremath{\Longleftarrow}}}1
{∩}{{\ensuremath{\cap}}}1
{∪}{{\ensuremath{\cup}}}1
{⊂}{{\ensuremath{\subseteq}}}1
{⊆}{{\ensuremath{\subseteq}}}1
{⊄}{{\ensuremath{\nsubseteq}}}1
{⊈}{{\ensuremath{\nsubseteq}}}1
{⊃}{{\ensuremath{\supseteq}}}1
{⊇}{{\ensuremath{\supseteq}}}1
{⊅}{{\ensuremath{\nsupseteq}}}1
{⊉}{{\ensuremath{\nsupseteq}}}1
{∈}{{\ensuremath{\in}}}1
{∉}{{\ensuremath{\notin}}}1
{∋}{{\ensuremath{\ni}}}1
{∌}{{\ensuremath{\notni}}}1
{∅}{{\ensuremath{\emptyset}}}1
{∖}{{\ensuremath{\setminus}}}1
{†}{{\ensuremath{\dag}}}1
{ℕ}{{\ensuremath{\mathbb{N}}}}1
{ℤ}{{\ensuremath{\mathbb{Z}}}}1
{ℝ}{{\ensuremath{\mathbb{R}}}}1
{ℚ}{{\ensuremath{\mathbb{Q}}}}1
{ℂ}{{\ensuremath{\mathbb{C}}}}1
{⌞}{{\ensuremath{\llcorner}}}1
{⌟}{{\ensuremath{\lrcorner}}}1
{⦃}{{\ensuremath{\{\!|}}}1
{⦄}{{\ensuremath{|\!\}}}}1
{‖}{{\ensuremath{\|}}}1
{₁}{{\ensuremath{_1}}}1
{₂}{{\ensuremath{_2}}}1
{₃}{{\ensuremath{_3}}}1
{₄}{{\ensuremath{_4}}}1
{₅}{{\ensuremath{_5}}}1
{₆}{{\ensuremath{_6}}}1
{₇}{{\ensuremath{_7}}}1
{₈}{{\ensuremath{_8}}}1
{₉}{{\ensuremath{_9}}}1
{₀}{{\ensuremath{_0}}}1
{ᵢ}{{\ensuremath{_i}}}1
{ⱼ}{{\ensuremath{_j}}}1
{ₐ}{{\ensuremath{_a}}}1
{¹}{{\ensuremath{^1}}}1
{ₙ}{{\ensuremath{_n}}}1
{ₘ}{{\ensuremath{_m}}}1
{↑}{{\ensuremath{\uparrow}}}1
{↓}{{\ensuremath{\downarrow}}}1
{...}{{\ensuremath{...}}}1
{·}{{\ensuremath{\circ{}}}}1
{▸}{{\ensuremath{\triangleright}}}1
{Σ}{{\symbolcol{}\ensuremath{\Sigma}}}1
{Π}{{\symbolcol{}\ensuremath{\Pi}}}1
{∀}{{\symbolcol{}\ensuremath{\forall}}}1
{∃}{{\symbolcol{}\ensuremath{\exists}}}1
{λ}{{\symbolcol{}\ensuremath{\mathrm{\lambda}}}}1
{\\}{{\symbolcol{}\ensuremath{\mathrm{\lambda}}}}1
{\$}{{\symbolcol{}\$}}1
{::=}{{\symbolcol{}::=}}3
{:=}{{\symbolcol{}:=}}1
{.=}{{\ensuremath{\symbolcol{}\dot=}}}2
{=}{{\symbolcol{}=}}1
{<|>}{{\symbolcol{}<|>}}3
{<\$>}{{\symbolcol{}<\$>}}3
{+}{{\symbolcol{}+}}1
{*}{{\symbolcol{}*}}1
{⇒}{{\ensuremath{\rightsquigarrow}}}1
{=>}{{\symbolcol{}\ensuremath{\Rightarrow}}}2
{𝟙}{{\ensuremath{\mathbb{T}}}}1
{𝕄}{{\ensuremath{\mathbb{M}}}}2
{⋄}{{\ensuremath{\mathnormal{\diamond}}}}1
,
morecomment=[s][\color{commentcolor}]{/-}{-/},
morecomment=[l][\itshape \color{commentcolor}]{--},
showstringspaces=false,
keepspaces=true,
morestring=[b]",
morestring=[d],
tabsize=2,
extendedchars=false,
sensitive=true,
breaklines=true,
breakatwhitespace=true,
basicstyle={\sffamily\footnotesize\linespread{1.15}\selectfont}, %
columns=[l]flexible,
basewidth={0.6em,0.49em}, %
identifierstyle={\color{black}},
keywordstyle=[2]{\color{sortcolor}},
keywordstyle=[3]{\color{tacticcolor}},
keywordstyle=[4]{\color{attributecolor}},
stringstyle=\sffamily,
commentstyle={\footnotesize },
}

\lstdefinelanguage{coq}  {keywordstyle=[1]{\color{coqkeywordcolor}},
  morekeywords=[1]{
  match, let, forall, Pi, fun, exists, if, then, else,
  do, where, %
  Inductive, CoInductive, Fixpoint, Equations, Class, Reserved, Notation, Infix, Instance,
  }}
\lstdefinelanguage{scala}{keywordstyle=[1]{\color{scalakeywordcolor}},
  morekeywords=[1]{
  import, export, protected, private, public, override, infix, extension,
  val, var, def, class, object, package, trait,
  given, using, with, extends, deriving, implicit, summon,
  match, case,
  if, then, else, break, continue, return, try, catch, for, yield, do,
  macro,
  purify,
  }}

\newcommand{\languagetag}{}

\makeatletter
\lst@AddToHook{DeInit}{\languagetag}
\makeatother

\lstnewenvironment{coqlisting}[1][]
  { \renewcommand{\languagetag}{\parbox[c][0pt]{\linewidth}{\raggedleft\scriptsize\color{languagecolor} Coq}}%
    \lstset{language=coq, #1}}{}
\lstnewenvironment{scalalisting}[1][]
  { \renewcommand{\languagetag}{\parbox[c][0pt]{\linewidth}{\raggedleft\scriptsize\color{languagecolor} Scala}}%
    \lstset{language=scala, #1}}{}
\lstnewenvironment{myequations}[1][]
  {\renewcommand{\symbolcol}{\color{nosymbolcolor}}\noindent\adjustbox{center,vspace=\bigskipamount}\bgroup\lstset{#1, frame=none}}
  {\egroup\renewcommand{\symbolcol}{\color{symbolcolor}}}

\lstMakeShortInline[columns={[l]fullflexible},language=coq]|
\lstMakeShortInline[columns={[l]fullflexible},language=scala]^

\DeclareCaptionType{eqnlisting}[Equations][List of Equations]
\crefname{eqnlisting}{Equations}{Equations}
\Crefname{eqnlisting}{Equations}{Equations}

\AtBeginDocument{
  \DeclareCaptionSubType{lstlisting}
  \captionsetup[sublstlisting]{margin=3pt}
  \crefalias{sublstlisting}{listing}
}

\usepackage{bbm}

\usepackage{subcaption}

\bibliographystyle{plainurl}%

\title{A Direct-Style Effect Notation\texorpdfstring{\\}{ }for Sequential and Parallel Programs} %
\titlerunning{A Direct-Style Effect Notation for Sequential and Parallel Programs} %

\author%
  {David Richter}%
  {Technical University of Damstadt, Germany}%
  {david.richter@tu-darmstadt.de}%
  {https://orcid.org/0000-0002-8672-0265}%
  {German Federal Ministry of Education and Research \textit{iBlockchain project} (BMBF No.~16KIS0902)}
\author%
  {Timon Böhler}%
  {Technical University of Damstadt, Germany}%
  {timon.boehler@stud.tu-darmstadt.de}%
  {https://orcid.org/0009-0002-9964-7367}%
  {Hessian Ministry of Higher Education, Research, Science and the Arts via the project 3rd Wave of AI (3AI)}
\author%
  {Pascal Weisenburger}%
  {University of St.\texorpdfstring{\,}{ }Gallen, Switzerland}%
  {pascal.weisenburger@unisg.ch}%
  {https://orcid.org/0000-0003-1288-1485}%
  {The University of St.\texorpdfstring{\,}{ }Gallen (IPF, No.~1031569); Swiss National Science Foundation (SNSF, No.~200429)}
\author%
  {Mira Mezini}%
  {Technical University of Damstadt, Germany\texorpdfstring{\\}{; }hessian.AI, Germany}%
  {mezini@informatik.tu-darmstadt.de}%
  {https://orcid.org/0000-0001-6563-7537}%
  {Hessian Ministry of Higher Education, Research, Science and the Arts via the project 3rd Wave of AI (3AI);
   BMBF and the Hessian Ministry of Higher Education, Research, Science and the Arts within their joint support of the \textit{National Research Center for Applied Cybersecurity ATHENE};
   German Federal Ministry of Education and Research \textit{iBlockchain project} (BMBF No.~16KIS0902)}

\authorrunning{Richter, Böhler, Weisenburger, Mezini} %

\Copyright{Jane Open Access and Joan R. Public} %

\ccsdesc[500]{Software and its engineering~Domain specific languages}
\ccsdesc[300]{Software and its engineering~Concurrent programming structures}
\ccsdesc[300]{Software and its engineering~Parallel programming languages}
\keywords{do-notation, parallelism, concurrency, effects} %

\category{} %

\supplementdetails[linktext={https://github.com/stg-tud/parseq-notation}]{Software}{https://github.com/stg-tud/parseq-notation} %

\newcommand{\citep}[1]{\cite{#1}}
\newcommand{\citet}[1]{\cite{#1}}

\EventEditors{Karim Ali and Guido Salvaneschi}
\EventNoEds{2}
\EventLongTitle{37th European Conference on Object-Oriented Programming (ECOOP 2023)}
\EventShortTitle{ECOOP 2023}
\EventAcronym{ECOOP}
\EventYear{2023}
\EventDate{July 17--21, 2023}
\EventLocation{Seattle, Washington, United States}
\EventLogo{}
\SeriesVolume{263}
\ArticleNo{40}

\begin{document}

\maketitle

\begin{abstract}
Modeling sequential and parallel composition of effectful computations
has been investigated in a variety of languages for a long time.
In particular, the popular do-notation provides a lightweight effect embedding for any instance of a monad.
Idiom bracket notation, on the other hand, provides an embedding for applicatives.
First, while monads force effects to be executed sequentially, ignoring potential for parallelism, applicatives do not support sequential effects.
Composing sequential with parallel effects remains an open problem.
This is even more of an issue as real programs consist of a combination of both sequential and parallel segments.
Second, common notations do not support invoking effects in direct-style,
instead forcing a rigid structure upon the code.

In this paper, we propose a mixed applicative/monadic notation
that retains parallelism where possible,
but allows sequentiality where necessary.
We leverage a direct-style notation where sequentiality or parallelism is derived from the structure of the code.
We provide a mechanisation of our effectful language in Coq and prove that our compilation approach retains the parallelism of the source program.
\end{abstract}

\section{Introduction}

Programming language designers often select a few common effects (state, IO, network) and bake them into the language.
It is, however, impossible to predict what effects developers will need in the future
(as was the case with integrated queries~\cite{MeijerBB06, Meijer07, BiermanMT07},
reactive programming~\cite{Meijer12},
asynchronous programming~\cite{SymePL11,BiermanRMMT12},
multitier programming~\cite{NeubauerT05,RichterKWSFM22},
differentiable programming \cite{Karczmarczuk98}, etc).
Thus, we argue that language designs should be equipped with support for developer-implementable effects.

Modeling effectful computations has long been the subject of investigation
in the context of various languages.
As a result, there exist different abstractions and notations with different properties.
A prominent abstraction for modeling effectful computations 
are monads (e.g., known from Haskell) and the do-notation that emerged from monadic comprehensions~\citep{Wadler92}
providing a lightweight way to embed monads into programs.
But monads force effects to be executed sequentially,
ignoring potential for parallelism.
Notations for parallelism,
such as idiom bracket notation for applicatives~\citep{McbrideP08}, on the other hand,
do not support sequential effects.
Yet, programs are rarely only parallel or only sequential;
thus it is desirable to support both sequential and parallel composition of effectful operations.

To the best of our knowledge, there are no approaches that provide such support.
The |ApplicativeDo| approach by Marlow et al.\@ \cite{MarlowBCP14, MarlowJKM16}
attempts to retrofit parallelism into the do-notation, i.e.,
with |ApplicativeDo|, developers write code using the do-notation and
an optimising compiler tries to infer which computations are parallelizable.
Yet, in the general case, it is not possible to decide statically
whether two monadic operations are actually parallelizable or whether the result of one operation depends on the execution of the other.
Hence, there is a danger that the compiler either incorrectly decides
that two operations can be executed in parallel,
which can lead to race conditions, or conservatively decides
to not parallelize operations that could actually be parallelized,
reducing the potential for improved performance.
To counteract race conditions, the |ApplicativeDo| approach requires developers to adhere to specific coding conventions such as
only using expressions which are either all read-only or write-only~\cite{MarlowBCP14}.

Another weak point of Haskell's do-notation (and thus also of the approach by Marlow et al.)
is that it enforces a specific structure upon the code 
with strict adherence to one effect per line, which
does not allow effects in arbitrary places.
The do-notations in Idris \cite{idris-bang-notation} and Lean~\cite{lean-do-notation, UllrichM22} are less restrictive and support direct-style effect usage.
Scala supports both structuring effectful code in do-notation via for-comprehensions
and for in some cases in direct-style via async-await~\cite{scala-async};
but, both are based on monads, thus force sequential execution.
Although async-await was explicitly designed for concurrency,
developers must be careful to start parallel execution
before accessing their result
to preserve parallelism.

In this paper, we propose a direct-style notation that enables 
sequential and parallel composition of effectful operations
(using monads and applicatives, respectively) without forcing a specific structure of the code.
We present a one-pass translation from direct-style to monadic effect combinators.
Instead of trying to infer the potential for parallelism on top of sequential programs, 
our approach preserves parallelism that is inherent in the structure of the code thanks to direct style.
This makes it easier to reason about the correctness of the proposed translation process
and we present a correctness proof and its mechanization in Coq~\cite{coq-8-16}.
We conceptualize the preservation of parallelism as the span of a term
(the length of the longest path of effectful operations)
and the work of a term (the sum of all effectful operations therein).
Our translation is span-preserving leveraging applicatives and monads.
In contrast, notations based on monads alone are not span-preserving,
as they have to chain all effectful operations into a single sequence.

Our compilation has an elegant description
as a set of equations forming a structurally recursive function over the syntax,
whose equations are the well-known monadic and applicative laws and free theorems.
Implementations for do-notation are essentially compilers for an effectful language.
They can produce efficient code by avoiding administrative redexes
and generating proper tail calls.
Including this optimisation in standard effectful languages modeled by monads
can be seen as performing partial evaluation of the code
via the semantics, extended by the monad laws.
In our case, we target mixed monadic and applicative code.
Therefore, our optimised translation also combines the use of the monadic laws with the use of applicative laws.

\subparagraph{Contributions.}
In summary, this paper makes the following contributions:
\begin{itemize}
\item We present the first mixed applicative/monadic direct-style effect notation.
\item We formalize an optimised one-pass translation from direct-style to effect combinators.
\item We prove that our translation preserves typability, semantics, and parallelism.
\item We mechanize the proof using parametric higher order abstract syntax.
\item We implement the proposed translation in the Scala programming language using Scala macros,
      which enables us to stay close to the formal development.
\end{itemize}

\subparagraph{Structure.}
The remainder of the paper is structured as follows.
\Cref{sec:overview} provides code examples and an intuitive overview of our approach.
\Cref{sec:formalisation} formally defines the proposed translation and provides a proof that it preserves typability, semantics, and parallelism, which is mechanized using parametric higher order abstract syntax.
\Cref{sec:implementation} presents the implementation in Scala.
\Cref{sec:related-work} surveys related work.
\Cref{sec:conclusion} concludes the paper and presents ideas for future work.

\section{Overview}
\label{sec:overview}

In this section, we (a) briefly discuss the difference between monadic,
applicative, and mixed notations by examples in Scala, and (b)  
informally present our mixed direct notation and its implementation 
by translation to effect combinators.

\subsection{Monadic, Applicative, Mixed and Direct Style Notations}

\subparagraph{Functors, Applicatives and  Monads.}
A functor (in functional programming) for |F: Type → Type| is a method |map|
that turns a function on values into a function on values wrapped in the functor.
Intuitively, a value of type |F A| represents an effectful computation of type |A|.
An applicative for |F| is a functor and a method |pure| to wrap a value into the functor,
and a method |ap| (which we occasionally also write |f ⋄ x| instead of |ap f x|)
to apply an effectful computation returning a function,
to an effectful computation returning an argument.
A monad for |F| is an applicative for |F| and a method |bind|,
which runs an effectful computation and feeds the resulting value to another effectful computation.
Below we show the mathematical description and an encoding in Scala via traits:

\medskip\noindent\begin{minipage}{.5\textwidth}
\begin{myequations}
map:  (A → B) → (F A → F B)
pure: A → F A
ap:   F (A → B) → (F A → F B)
bind: (A → F B) → (F A → F B)
\end{myequations}
\end{minipage}
\begin{minipage}{.5\linewidth}
\begin{scalalisting}[columns={[l]fullflexible}]
trait Functor[F[_]]:
  def map(f: A => B, a: F[A]): F[B]
trait Applicative[F[_]] extends Functor[F]:
  def pure(a: A): F[A]
  def ap(f: F[A => B], a: F[A]): F[B]
trait Monad[F[_]] extends Applicative[F]:
  def bind(f: A => F[B], a: F[A]): F[B]
\end{scalalisting}
\end{minipage}

For convenience, we will write ^pure(x)^, ^x.bind(f)^ and ^f.ap(x)^,
so we define them in Scala as a ^extension^ methods for every object which has a corresponding instance,
and |pure(x)| as a top-level function.
Note that we swapped arguments for bind as a method ^x.bind(f)^
with regard to its type as a function ^bind(f)(x)^.

\subparagraph{Monadic Notation.}
To illustrate monadic notations,
consider the two lines of code below (left side) that use a for-comprehension ^for ... yield ...^.\footnote{
  For-comprehensions \lstinline[language=scala]|for ... yield ...|
  are Scala's equivalent of Haskell's do-notation \lstinline[language=scala]|do ...; return ...|
  The main difference besides Scala's and Haskell's monadic notation is
  that every for-comprehension must end with a \lstinline[language=scala]|yield|.
  Yet, this does not reduce expressive power,
  as any do-block without a final return can be expressed with an additional binding,
  e.g., \lstinline[language=scala]|do ...; e| can be represented by \lstinline[language=scala]|for ...; tmp ← e  yield tmp|.}
The programs execute two effectful statements ^fetchX^ and ^fetchY^ and bind the result in variables ^x^ and ^y^, respectively,
to be combined by a function call to |f|.
The for-comprehension (monadic notation) is desugared into explicit use of monadic |bind| (right side).

\medskip\noindent\begin{minipage}{.45\textwidth}
\begin{scalalisting}[columns={[l]fullflexible}]
for x ← fetchX; y ← fetchY  yield f(x)(y)
for y ← fetchY; x ← fetchX  yield f(x)(y)
\end{scalalisting}
\end{minipage}
\begin{minipage}{.55\textwidth}
\begin{scalalisting}[columns={[l]fullflexible}]
fetchX.bind(x => fetchY.bind(y => pure(f(x)(y))))
fetchY.bind(y => fetchX.bind(x => pure(f(x)(y))))
\end{scalalisting}
\end{minipage}

\subparagraph{Applicative Notation.}
In the programs above, ^x^ does not depend on ^y^ and vice versa.
If all statements in the program of an applicative are independent from each other --
e.g., none of the variables that are introduced in the for-comprehension are used in the ^for^ part,
but only after the ^yield^, which has access to all variables introduced above --
we can interpret the for-comprehension as an applicative notation instead of monadic notation.
Then, the program below on the left side would be translated into the program below on the right side,
using applicative ^ap^ to encode actual parallelism,
where ^ap^ is parallel execution followed by function application.
In the example program, ^pure(f)^, ^x^ and ^y^ are executed in parallel,
followed by the function application of the result of ^pure(f)^
to the result of ^x^ and the result of ^y^:

\medskip\noindent\begin{minipage}{.45\textwidth}
\begin{scalalisting}[columns={[l]fullflexible}]
for x ← fetchX; y ← fetchY  yield f(x)(y)
\end{scalalisting}
\end{minipage}
\begin{minipage}{.55\textwidth}
\begin{scalalisting}[columns={[l]fullflexible}]
pure(f).ap(fetchX).ap(fetchY)
\end{scalalisting}
\end{minipage}

\subparagraph{Mixed Notation.}
To illustrate where these notations for effectful computations fall short,
consider the following larger program
that fetches four resources from the Internet.
The program first fetches a resource ^urlXX^, which contains another url ^urlX^,
and then fetches a resource from ^urlX^ and stores it in ^x^.
The program further fetches a resource ^urlYY^, which contains another url ^urlY^,
and then fetches the resource from ^urlY^ and stores it in ^y^.
Finally, the program concatenates ^x^ and ^y^:

\medskip\noindent\begin{minipage}{.475\textwidth}
\begin{scalalisting}
val urlXX = "https://example.org/configx"
val urlYY = "https://example.org/configy"
for urlX  ← fetch(urlXX)
    x     ← fetch(urlX)
    urlY  ← fetch(urlYY)
    y     ← fetch(urlY)
    yield x ++ y
\end{scalalisting}
\end{minipage}%
\hspace{.05\textwidth}%
\begin{minipage}{.475\textwidth}
\small
\begin{tikzcd}
                                          & \bullet \arrow[ld] \arrow[rd] &                                           \\
\text{\lstinline|fetch(urlXX)|} \arrow[d] &                               & \text{\lstinline|fetch(urlYY)|} \arrow[d] \\
\text{\lstinline|fetch(urlX)|} \arrow[rd] &                               & \text{\lstinline|fetch(urlY)|} \arrow[ld] \\
                                          & \text{\lstinline|x ++ y|}     &                                           
\end{tikzcd}
\end{minipage}

\bigskip

Observe that ^urlXX^ needs to be fetched before ^urlX^ can be fetched and
^urlYY^ needs to be fetched before ^urlY^ can be fetched. But
fetching ^urlXX^ and ^urlYY^ is independent from each other;
and so is ^urlX^ and ^urlY^ (as illustrated in the diagram above).
Thus, the example contains both parallel and sequential elements.
However, if implemented via monadic notation,
the program will run sequentially, fetching ^urlXX^, then ^urlX^, then ^urlYY^, then ^urlY^.
The applicative notation, on the other hand, is not even possible
because it would require all effects to be independent from each other.

\subparagraph{Direct-Style Mixed Notation.}

In our direct-style notation, we combine the syntactic form ^for ... yield ...^
into a single instruction ^purify^.
Now, the program can be written to look like the following snippet, which reads
\emph{Concatenate
(1)~the result of fetching the value pointed to by a URL by fetching} |urlXX| \emph{with
(2)~the result of fetching the value pointed to by a URL by fetching} |urlYY|:
\begin{scalalisting}[columns={[l]fullflexible}]
purify:
  fetch(fetch(urlXX).↓).↓ ++ fetch(fetch(urlYY).↓).↓
\end{scalalisting}

The ^purify^ operation introduces an operation of type ^↓: F[X] => X^ used like ^... . ↓^ into the local scope,
which represents direct-style effect execution.
If the enclosed code does not make use of the ^↓^, the operation ^purify^ works exactly like ^pure^.
Otherwise, effect execution ^↓^ is translated into proper use of ^bind^ and ^ap^.

In direct style, potential for parallelism is implicitly defined by the structure of the code.
In particular, the function arguments of concat ^++^
naturally have no dependency on each other, and can therefore be executed in parallel.
The corresponding program with explicit mixed monadic/applicative combinators is:
\begin{scalalisting}[columns={[l]fullflexible}]
pure(x => y => x ++ y).ap( fetch(urlXX).bind(fetch) ).ap( fetch(urlYY).bind(fetch) )
\end{scalalisting}

The approach we propose in this paper exploits the information encoded in our direct-style notation 
to define a compositional (e.g., structurally recursive) and provably correct 
compiler that transforms such direct-style programs into semantically 
equivalent mixed applicative/monadic programs.

\subsection{Discussion}

We discuss the similarities and differences between different notations.
For illustration, consider that
monadic code in the for/yield notation can be easily refactored into direct style,
roughly by replacing ^←^ with ^= ↓^.
In turn, monadic code is compiled into explicit use of monadic operators
by calling bind on each value and calling map on the last:

\qquad

\noindent\begin{minipage}{.33\textwidth}
\begin{scalalisting}
for
  x ← a
  y ← b
  z ← c
  yield e
\end{scalalisting}
\end{minipage}
\begin{minipage}{.33\textwidth}
\begin{scalalisting}
purify:
  val x = a.↓
  val y = b.↓
  val z = c.↓
  e
\end{scalalisting}
\end{minipage}
\begin{minipage}{.33\textwidth}
\begin{scalalisting}

a.bind { x =>
  b.bind { y =>
    c.map  { z =>
      e }}}
\end{scalalisting}
\end{minipage}

\subparagraph{Scaling.}
A benefit of direct-style code is that it ``scales'' better
for larger programs in the sense that it integrates well with common language constructs.
In particular, direct style composes better with branching.
Consider the following versions of the same program.
On the left, the program is written in monadic notation, implemented in pure Scala,
which requires us to leave and enter monadic notation a second time.
The program fetches the time of the last change and the last caching of a certain request.
If the resource has been updated since the last caching,
we count the cache miss for statistics,
request the resource freshly, pass it to the cache and return it.
Otherwise, we count the cache hit for statistics, and return the answer from the cache.
Now, compare the program on the left with a clearer direct-style representation on the right, which is implemented 
using our approach (code on the right).

\medskip\noindent\begin{minipage}{.48\textwidth}
\begin{scalalisting}
for freshtime <- fetch(freshtimeUrl)
    cachetime <- fetch(cachetimeUrl)
    result    <-
      if freshtime >   cachetime
      then for _    <- countFresh
               tmp1 <- fetchFresh
               tmp2 <- writeCache(tmp1)
               yield tmp2
      else for _    <- countCache
               tmp  <- readCache
               yield tmp
    yield result
\end{scalalisting}
\end{minipage}
\begin{minipage}{.52\textwidth}
\begin{scalalisting}

purify:
  if fetch(freshtimeUrl).↓ >
     fetch(cachetimeUrl).↓
  then
    countFresh.↓
    parseAndCache(fetchFresh.↓).↓
  else
    countCache.↓
    readCache.↓
~~
\end{scalalisting}
\end{minipage}

\begin{table}[b]
\caption{Subnotations.}\label{fig:subnotations}
\vspace{-1em}
\begin{center}\begin{tabular}{lll}
\toprule
Scheme                             & Description     & Typeclass\\
\midrule
^purify{ ... ↓ ... }^                    & one mark          & Functor     ^map^  \\
^purify{ ... ↓ ... ↓ ... }^              & multiple marks    & Applicative ^ap^   \\
^purify{ ... (... ...↓ ...).↓ ... }^     & nested marks      & Monad       ^bind^ \\
^purify{ ...; val x = ...↓; ... ↓ ... }^ & consecutive marks & Monad       ^bind^ \\
\bottomrule
\end{tabular}\end{center}
\end{table}

\subparagraph{Sub-notations.}
Direct-style notation subsumes three different explicit effect notations (\cref{fig:subnotations}).
First, if a ^purify^ expression contains exactly one down arrow ^↓^ as a mark for effect execution (Row~1),
then the notation translates to solely using the Functor interface.
This case corresponds to a standard map operation.

Second, if a ^purify^ expression contains multiple such marks, which are ``parallel'' with regard to each other (Row~2)
-- i.e., when they are side-by-side inside different arguments to a function --
then the expression translates to solely using the Applicative interface.
Crucially, in this case, different effect executions cannot depend on each other.
We call these effect executions ``parallel'' as contrasted with ``sequential'' code, where a statement can depend on the previous one.
Such parallel composition enables parallel execution of effects at run time.

Third, if the expression makes use of nested marks (Row~3)
-- or equivalently of marks which make use of previously bound variables which contained marks (Row~4) --
then the expression translates to using the full Monad interface,
which models sequential code.

Direct-style enables to define parallelism naturally by structuring code such that the execution of effects are independent,
which gives rise to parallel execution of code where this is possible and using sequential execution where necessary.

\subsection{From Laws to a Rewrite System}

We refresh the laws of functors, applicatives, and monads and
give an intuition how they can be used to optimise effectful programs.
Then we state a completion of the laws into a rewrite system.
We use the symbol |·| for function composition as in |f · g|,
and use the symbol |(·)| as the name of function composition,
when not used as an infix operator, i.e., |(·) f g v := f (g v)|.

\subparagraph{Laws.}
Typically, the coherence laws are formulated as follows~\cite{McbrideP08, LindleyWY11}.

For Functors,
|map| preserves identity and composition, i.e.,
applying the identity function to an effectful computation
is the same as not doing anything; and applying
two function in sequence to an effectful value
is the same as applying the composite once.
\begin{myequations}
identity      : map id v          =  id v
composition   : map (f · g) v     =  map f (map g v)
\end{myequations}

For Applicatives,
|pure| creates a effect-free, i.e., \emph{pure} value from a value.
The \emph{homomorphism law} states that,
applying a pure function to a pure argument is pure.
The \emph{identity law} states that, if the function is just pure, then there is nothing to do.
The \emph{interchange law} states that, if the argument is pure,
then we can swap the pure argument with the effectful function.
\begin{myequations}
homomorphism  : pure f ⋄ pure v   =  pure (f v)
identity      : pure id ⋄ v       =  v
interchange   : f ⋄ pure v        =  pure (λ f', f' v) ⋄ f
composition   : u ⋄ (v ⋄ w)       =  pure (·) ⋄ u ⋄ v ⋄ w
\end{myequations}

For Monads,
the first and second laws state that executing a pure computation
amounts to not having to execute any effect at all,
allowing us to eliminate the |bind|.
The third law states that |bind| is associative, i.e.,
applying two effectful functions in sequence is the same as applying
the effectful composite of the two functions once.
\begin{myequations}
leftunit      : bind f (pure v)   =  f v
rightunit     : bind pure v       =  v
associativity : bind g (bind f v) =  bind (bind g · f) v
\end{myequations}

\subparagraph{Free theorems.}
The laws are phrased as an equational theory -- to create a compiler from the laws,
we need to rephrase them as a terminating rewrite system.
To do so, we first complete our set of equations with the following free theorems.
Free theorems hold in programming languages with parametric polymorphism
  by parametricity for free~\cite{Wadler89, Voigtlander19},
therefore they are often not stated specifically in the laws,
because there is no additional effort required to make them hold.
On the other hand, as we are interested in making use of laws for optimisation purposes,
we are allowed to make use of the free theorems as well.

Consider the function |pure: ∀ A, A → F A|.
Because it must work over all |A| it cannot change or create new elements of type |A|,
but only duplicate or forget values of type |A|.
Therefore, it does not matter
whether we apply a function |g| to change the |A|s into |B|s before or after applying |pure|.
On the left we apply |f| on the argument of |pure|, on the right we apply |f| on the result:
\begin{myequations}
free_pure:  map f (pure v)  =  pure (f v)
\end{myequations}

Similarly, consider the function |ap: ∀ A B, F (A → B) → F A → F B| and |bind: ∀ A B, (A → F B) → (F A → F B)|.
On the left we apply |f| on the argument of |ap| and |bind|, on the right we apply |f| on the result,
  where \lstinline!(f·)! stands for \lstinline!λ g, f · g!:
\begin{myequations}
free_ap   : ap (map (f·) g) v   =  map f (ap g v)
free_bind : bind (map f · g) v  =  map f (bind g v)
\end{myequations}

We instantiate |g| with |pure id| in the applicative case and with |pure| in the monadic case,
then we can extend the equation chain to the left by the free theorem of pure (|map f · pure = pure · f|),
and to the right by the identity applicative law (|ap (pure id) v  =  v|)
respectively the left-unit monad law (|bind pure v = v|):
\begin{myequations}
*free_ap  : ap (pure f) v     = ap (map (f·) (pure id)) v = map f (ap (pure id) v) = map f v
*free_bind: bind (pure · f) v = bind (map f · pure) v     = map f (bind pure v)    = map f v
\end{myequations}

In fact, these two equation share a common right-hand side,
and thus we can combine them to get a connection between applicative ap and monadic bind:
\begin{myequations}
ap_bind:   ap (pure f) v =  bind (pure · f) v
\end{myequations}

\subparagraph{Completion.}
We can use the free theorems to complete the functor, applicative and monad laws
into a more suitable form.
In particular, we replace
the identity law of the applicative
with their generalization derived above.
Similarily, the right hand side of the interchange law contained the left hand side of the identity law,
therefore we simplify it by composition with the identity law.
Further, observe that the homomorphism law becomes superflous,
as it can be constructed by applying the identity law (or equivalently by the interchange law)
followed by the free theorem of pure;
however we will still make use of it in swapped direction,
such that reading the laws from left to righ,
it does not overlap with the other applicative laws.
The complete set of equations is now:
\begin{myequations}
identity      : map id v           =  v
composition   : map f (map g v)    =  map (f · g) v

homomorphism  : pure (f v)         =  pure f ⋄ pure v             -- swapped
identity      : pure f ⋄ v         =  bind (λ v', pure (f v')) v  -- generalised by ap_bind
interchange   : f ⋄ pure v         =  bind (λ f', pure (f' v)) f  -- combined with identity
composition   : u ⋄ (v ⋄ w)        =  map (·) u ⋄ v ⋄ w           -- combined with identity

leftunit      : bind f (pure v)    =  f v
rightunit     : bind pure v        =  v
associativity : bind g (bind f v)  =  bind (bind g · f) v
\end{myequations}

Looking at the equations above,
we see that the identity and interchange law show that a non-pure argument to ap on the left and on the right
can each be represented by a bind, so one might think
both laws can be unified by a single law, using two binds like
|fs ⋄ xs = bind (λ f, bind (λ x, pure (f x)) xs) fs|.
However, it is not valid to assume this equation.
Actually, there are \emph{at least two} possible trivial instances of an applicative for any monad,
the left-to-right applicative above, but also the right-to-left applicative:
|fs ⋄ xs = bind (λ x, bind (λ f, pure (f x)) fs) xs|.
There is no reason to prefer one over the other,
and, in general, the assumption of either of these equations is too strong;
committing to one such equation would allow elimination of all |ap|s into |bind|s,
and thus implies full sequentiality.
To support parallelism, we have to make neither assumption
and only rely on the equations derived from the applicative laws.

\subsection{Translation}
\label{sec:translation}
We present a rewrite system based on the laws,
and prove its terminating by phrasing it as a structurally recursive function.

We distinguish between a source language and a target language below. %
The source language consists of term formers for variables, function application,
and the direct-style effect execution |↓| represented as |Each|.
The target language consists of term formers for variables, function application, and effect combinators |Pure|, |Bind| and |Ap|;
and parallelism is explicitly structured by those combinators.

\begin{myequations}
(Src)  e, f ::= Var x | App f e | Each e
(Tgt)  g, h ::= Var x | App g h | Pure g | Ap g h | Bind g h
\end{myequations}

The source language uses direct style.
In programs written in the source language, parallelism is implicitly defined by the structure of the code.
In particular, function arguments naturally have no dependency on each other, and can therefore be executed in parallel.
Our compiler translates direct style into monadic and applicative combinators.
The essence of our compilation strategy is to use the monadic and applicative laws
directly as the actual transformation rules.

\subparagraph{Basic Translation.}
The translation starts with the |PURE| expression,
which is implemented as a structurally recursive function over syntax,
expanding the direct-style use of effects |←| into the effect operation |Bind|,
while variables are wrapped in a |PURE|,
and function application is translated to applicative |Ap|,
realising that function arguments can be executed in parallel.

\begin{myequations}
PURE: Src → Tgt
PURE (Var x)    =  Pure (Var x)
PURE (Each e)   =  Bind (PURE e) id
PURE (App f e)  =  Ap (PURE f) (PURE e)
\end{myequations}

\subparagraph{Optimising Translation.}
If we only cared about a correct translation
from the direct-style notation
to the pure calculus with explicit combinators,
then the translation we discussed so far is sufficient.
Yet, we consider an optimising translation (\cref{fig:opt-translation}),
where instead of term using the constructors |Bind| and |Ap| (capitalized) directly,
we use the smart constructors |AP| and |BIND| (all capitals) instead.
Both the constructor and the smart constructor of a term do construct terms
that are semantically indistinguishable, i.e., |AP f x| $\approx$ |Ap f x|.
Smart constructors, however, internalize the optimisation
by reducing to a simpler term if possible.
The translation |PURE| we have described earlier
can be seen as such a smart constructor for the |Pure| term constructor.
It also preserves the semantics, i.e., |PURE x| $\approx$ |Pure x|.

The only difference between the basic and the optimised translation,
is that the optimised smart constructor |PURE| calls to the smart constructor |AP|
instead of using the term constructor |Ap| directly, which can lead to further optimisations.
In this way, we can leverage smart constructors
to integrate the translation with an optimisation
into a one-pass optimised translation.
For the optimisation, the smart constructors apply the monadic and applicative laws,
only in the other direction than the translation, i.e.,
bubbling up |Pure| in a structurally recursive way through the term,
and thereby removing superflous effect combinators in the generated code.

\begin{figure}
\begin{minipage}{.5\linewidth}
\begin{myequations}
(Src)  e,f ::= Var x | App f e | Each e
(Tgt)  g,h ::= Var x | App g h
             | Pure e | Ap g h | Bind g h
\end{myequations}
\end{minipage}
\begin{minipage}{.5\linewidth}
\begin{myequations}

PURE:  Src → Tgt
AP:    Tgt → Tgt → Tgt
BIND:  Tgt → Tgt → Tgt
~~
\end{myequations}
\end{minipage}
\vspace{-10pt}
\begin{myequations}
PURE (Var x)      =  Pure (Var x)                   -- indistinguishable
PURE (Each e)     =  BIND id (PURE e)               -- effect translation
PURE (App f e)    =  AP (PURE f) (PURE e)           -- homomorphism law

AP (Pure f) e     =  BIND (λ x, Pure (App f x)) e   -- identity law
AP f (Pure e)     =  BIND (λ x, Pure (App x e)) f   -- interchange law
AP f e            =  Ap f e                         -- indistinguishable

BIND g (Bind f e) =  BIND (Bind g · App f)) e       -- associativity law
BIND f (Pure e)   =  App f e                        -- left unit law
BIND Pure e       =  e                              -- right unit law
\end{myequations}
\caption{Optimised Translation.}\label{fig:opt-translation}
\end{figure}

In particular, |AP|~(\cref{fig:opt-translation}) will reduce the applicative application of a pure function to a pure argument back into the pure function application with only the result wrapped into |Pure| (which is simply the reverse rule of the homomorphism law we used above). Similarly, if either side of |AP| is pure, there are no two effects to be executed in parallel but just a single effect. Hence, we can reduce the term to a single monadic bind.
Finally, if neither argument to |AP| is marked as pure, then we simply return the actual term former |Ap| 
and retain the parallelism.
The optimisation rules that apply to |BIND|~(\cref{fig:opt-translation}) are similar.
If either of its arguments is marked as pure, we can avoid performing effects at all.
If we have nested binds, we can apply the associativity rule to generate a chain of binds.

Overall, seven of the ten equations above come from our generalized laws;
two hold by semantic \emph{indistinguish}ability,
and one is the basis for our \emph{effect translation},
namely the translation of the imperative |←| to an explicit |bind|.

In the following section,
we extend the language,
formalize the language and the translation using a Coq mechanisation,
and prove correctness.

\section{Mechanisation}
\label{sec:formalisation}

We define the source language that features our effect notation and a translation
to a target language which is the subset of the source language that does not include the effect notation.
We prove that our translation preserves typability,
semantics, and parallel execution, which we measure through the program's span and work.
We have mechanized our language and proofs in Coq.

\subsection{Definitions}

We use (parametric) higher-order abstract syntax (PHOAS)~\cite{PfenningE88, Chlipala08},
which enables us to reuse the binders of the host language as binders of the guest language.
PHOAS avoids the need to define first-order syntax, an operational semantics and capture-avoiding substitution, thereby removing intricate lemmas regarding substitution and hundreds of lines of code from the mechanisation, bringing the proof more in line with a more legible pen-and-paper formalisation.

Further, we use intrinsically typed terms~\cite{Danielsson06, BentonHKM12, AltenkirchCU14, AltenkirchCU14b},
and a type-theoretic semantics~\cite{HarperS00}.
Using intrinsically typed terms together with dependent pattern matching allows us to define total evaluation
(in contrast to using untyped terms or simple pattern matching where we could just define partial evaluation).
The reason is that such an approach only needs to consider well-formed terms that \emph{don't go wrong}~\cite{Milner78}.

The common strategy behind all these approaches is to carve out a subset of the host language, that is the language we want to define (the guest language), and then reusing all the power of the host language to define the guest language, avoiding having to reimplement tedious implementation details:
The guest types simply mirror the host types, the guest terms mirror the host terms, and the evaluation function maps guest terms to host terms.

\begin{coqlisting}[float, caption={Lawful Monad.}, label={fig:lawfulmonad}]
Class Monad F := {
  map  {A B}: (A  →  B) → F A → F B;
  pure {A}  : A → F A;
  ap   {A B}: F (A → B) → F A → F B;
  bind {A B}: (A → F B) → F A → F B;
}.

Class LawfulMonad F := {
  monad :> Monad F;

  idl {A B} (f: A → F B)                {x}:  bind f (pure x)    =  f x;
  idr {A B} (f: A → B)                  {x}:  bind (pure · f) x  =  map f x;
  asc {A B C} (f: A → F B) (g: B → F C) {x}:  bind g (bind f x)  =  bind (bind g · f) x;

  apl  {A B} (f: A → B)                 {x}:  ap (pure f) x      =  map (λ x', f x') x;
  apr  {A B} (f: F (A → B))             {x}:  ap f (pure x)      =  map (λ f', f' x) f;
  aplr {A B} (f: A → B)                 {x}:  map f (pure x)     =  pure (f x);

  map_map {A B C} (g: A → B) (f: B → C) {x}:  map f (map g x)    =  map (λ x, f (g x)) x;
}.
\end{coqlisting}

\subparagraph{Lawful monads.}
For brevity, we do not define Functor, Applicative and Monad separately.
We define a class |Monad| and a class |LawfulMonad| (\cref{fig:lawfulmonad}).
Monad contains the functions |map|, |pure|, |ap|, and |bind|.
LawfulMonad extends Monad and further contains |idl|, |idr|, |asc|, |apl|, |apr|, |apl|, and |map_map|,
corresponding to the left and right unit law, and the associativity law of the monad,
and the identity and interchange law of the applicative,
the free theorem of pure,
and the composition law of the functor.

\subparagraph{Static semantics.}
From Coq, we use use units (|tt: Unit|), products (|(a,b): A×B|), functions (|(\a, b): A→B|).
Mirroring the data types of the host language,
we define the types for
unit (|𝟙|),
sums (|s ∨ t|),
products (|s ∧ t|),
functions (|s ⇒ t|) and
effects (|𝕄|) in the guest language (\cref{fig:types}).
We define a data type |ef| to label terms with,
as belonging to the source language |src|,
the target language |tgt|, or
the either language |com| (common) (\cref{fig:lang}).
Label denotation |EF m: ef → (Type → Type)| assigns each functor in the host language.
Concretely, the target and common label is assigned the identity effect functor (e.g. no effect),
and the source language is assigned the effect |m| given as an argument.

We define a data type |tm Γ B t|
for the syntax of our guest language (\cref{fig:tm}).
The terms are parametrized by a type denotation |Γ|, a language label |B| and a type |t|.
The common term formers are %
abstraction |Lam e|, application |App e f| and variables |Var v| to represent functions;
unit |Unt e|, tuple |Prd (e, f)| and projections |Fst e| and |Snd e| to represent products.
The source language has an additional term former |Each e|,
which represents the direct-style effect application |↓| from above.
The target language has additional term formers |Pure e|, |Ap e|, |Map f e|, and |Join e| representing the effect combinators.

The term former |Lam| binds variables. In PHOAS, guest-level bindings are
represented using the host language's bindings. This is why this constructs
takes as an argument a function which binds a variable, represented as a value of type |Γ t|.

\begin{figure}
\captionof{lstlisting}{Syntax and semantics.}
\begin{sublstlisting}{.28\textwidth}
\caption{Types.}\label{fig:types}
\begin{coqlisting}
Inductive
ty: Type :=
| 𝟙: ty
| ∨: ty → ty → ty
| ∧: ty → ty → ty
| ⇒: ty → ty → ty
| 𝕄: ty → ty.
\end{coqlisting}
\end{sublstlisting} \begin{sublstlisting}{.41\textwidth}
\caption{Type denotation.}\label{fig:eval-type}
\begin{coqlisting}
Equations
EVAL (m: Type → Type): ty → Type :=
| m,   𝟙    => Unit
| m, s ∨ t  => EVAL m s + EVAL m t
| m, s ∧ t  => EVAL m s × EVAL m t
| m, s ⇒ t  => EVAL m s → EVAL m t
| m, 𝕄 t   => m (EVAL m t).
\end{coqlisting}
\end{sublstlisting} \begin{sublstlisting}{.29\textwidth}
\caption{Labels and denotation.}\label{fig:lang}\label{fig:eval-lang}
\begin{coqlisting}
Inductive
ef := src | tgt | com.
Equations
EF m: ef → Type → Type :=
| m,src, t => m t
| m,com, t => t
| m,tgt, t => t.
\end{coqlisting}
\end{sublstlisting}

\vspace{1em}

\begin{sublstlisting}{\textwidth}
\caption{Term and their denotation.}\label{fig:tm}\label{fig:tm-eval}
\begin{coqlisting}
Inductive tm {Γ: ty → Type}: ef → ty → Type :=
| Var {B t}:   Γ t              → tm B t
| Unt {B}:     Unit             → tm B 𝟙
| Prd {B s t}: tm B s × tm B t  → tm B (s ∧ t) 
| Fst {B s t}: tm B (s ∧ t)     → tm B s
| Snd {B s t}: tm B (s ∧ t)     → tm B t
| App {B s t}: tm B (s ⇒ t)     → (tm B s → tm B t)
| Lam {B s t}: (Γ s → tm com t) → tm B (s ⇒ t)

| Each {t}:    tm src (𝕄 t)    → tm src t

| Pure {t}:    tm com t            → tm tgt (𝕄 t)
| Join {t}:    tm tgt (𝕄 (𝕄 t))  → tm tgt (𝕄 t)
| Map  {s t}:  tm tgt (s ⇒ t)      → (tm tgt (𝕄 s) → tm tgt (𝕄 t))
| Ap  {s t}:   tm tgt (𝕄 (s ⇒ t)) → (tm tgt (𝕄 s) → tm tgt (𝕄 t)).

Equations eval {t m} {M:Monad m} B: tm (EVAL m) B t → EF m B (EVAL m t) :=
| src, Var i       => M.(pure) i                              (* src *)
| src, Lam k       => M.(pure) (eval · k)
| src, Unt tt      => M.(pure) tt
| src, Fst e       => M.(map) (λ e', e'.1) (eval e)
| src, Snd e       => M.(map) (λ e', e'.2) (eval e)
| src, App e f     => M.(ap) (eval e) (eval f)
| src, Prd (e, f)  => M.(ap) (M.(map) (λ a' b', (a', b')) (eval e)) (eval f)
| src, Each e      => M.(bind) id (eval e)

| _,   Var i       => i                                       (* com or tgt *)
| _,   Lam k       => eval · k
| _,   Fst e       => (eval e).1
| _,   Snd e       => (eval e).2
| _,   App e f     => (eval e) (eval f)
| _,   Prd (e, f)  => (eval e, eval f)
| _,   Unt tt      => tt
| tgt, Map f e     => M.(map)  (eval f) (eval e)              (* only tgt *)
| tgt, Ap f e      => M.(ap)   (eval f) (eval e)
| tgt, Pure e      => M.(pure) (eval e)
| tgt, Join e      => M.(bind) id (eval e).
\end{coqlisting}

\end{sublstlisting}
\end{figure}

\subparagraph{Example.}
In our encoding,
the terms of the guest language can be written similarly to the terms of the host language
where each term former of the host language
is wrapped by a term former of the guest language.

For example, the identity function |(λ x, x)| can be encoded in the guest language as |Lam (λ x, Var x)|.
A term |(λ x y, add x y) a b| -- an eta-expanded addition function applied to some arguments --
can be expressed as |Lam (λ x, (Lam (λ y, add `App` (Var y) `App` (Var x)))) `App` (Var a) `App` (Var b)|,
writing application |x `App` y| infix for convenience.
Constructing a term of unit type |tt| is written in the guest language as |Unt tt|.
Similarly, projection on a pair |(a,b).1| is written in the guest language as |Fst (Prd (a,b))|.

\subparagraph{Dynamic semantics.}

Next, we define the dynamic semantics corresponding to the static semantics.
The static semantics has three parts: the types, the language labels and the terms.
Therefore, the dynamic semantics also defines three parts.

The denotation of a type |EVAL m: ty → Type| (\cref{fig:eval-type})
maps each guest type to its corresponding host type,
and is parametrized by a type constructor |m|, corresponding to the monad we evaluate in.

The denotation of a term with regard to the
previously defined type denotation |eval ... : tm (EVAL m) B t → EF m B (EVAL m t))|
(\cref{fig:tm-eval}) interprets the terms in a specific monad depending on which language the terms are labeled from.
More concretely it takes a term of type |t| and of label |B| to be evaluated in monad |m|,
and returns a value of the denotation of the type |EVAL m t| wrapped in the denotation of the label |EF m B|.
The evaluation for terms from the source language implicitly have effects
and can therefore only be interpreted in a monadic interpreter.
For the common and the target language,
we define an evaluation as simply the mapping of guest term formers
to their corresponding host expressions, while mapping variables to variables.

The decision of which monad to use is governed by the label denotation
|EF m: ef → ty → ty| (\cref{fig:eval-lang})
mapping the target and the common language (whose terms do not have implicitly any effects)
to the identity effect, e.g., no effect,
while the |src| language is mapped to the effect |M|.

Note the way we defined the common, source and target terms,
we can relable common terms into source or target terms, e.g.,
into any other language label |relabel: T Γ com t → T Γ e t|.

\subparagraph{Example.}
Consider the evaluation of the following term, which constructs and then destructs a pair of units, which is equal to unit:
|eval com (Fst (Prd (Unt tt, Unt tt))) = tt|.

\begin{coqlisting}[float, caption={Translation.}, label={fig:translation}]
Notation "f `AP` e" := (AP f e) (at level 20).

Equations PURE {Γ x} (e: tm Γ src x): tm Γ tgt (𝕄 x) :=
| Var i       => Pure (Var i)
| Unt tt      => Pure (Unt tt)
| Lam j       => Pure (Lam j)
| Fst e       => Pure (Λ e', Fst (Var e')) `AP` PURE e
| Snd e       => Pure (Λ e', Snd (Var e')) `AP` PURE e
| Prd (e, f)  => Pure (Λ e' f', Prd (Var e',  Var f')) `AP` PURE e `AP` PURE f
| App e f     => PURE e `AP` PURE f
| Each e      => JOIN (PURE e).

Equations AP {Γ s t} (f: tm Γ tgt (𝕄 (s ⇒ t))) (e: tm Γ tgt (𝕄 s)): tm Γ tgt (𝕄 t) :=
| Pure f, Pure e => Pure (App f e)
| Pure f,      e => Map (Lam (λ x, App f (Var x))) e
|      f, Pure e => Map (Lam (λ x, App (Var x) e)) f
|      f,      e => Ap f e.

Equations JOIN {Γ t} (e: tm Γ tgt (𝕄 (𝕄 t))): tm Γ tgt (𝕄 t) :=
| Pure e => to e
|      e => Join e.
\end{coqlisting}

\subparagraph{Translation.}
The compilation from the target into the source language is performed by the smart constructor |PURE|,
i.e., we compile from an effectful language into a pure language that uses monadic effect combinators.
We formally define |PURE| (\cref{fig:translation})
that performs both the action of a normal |pure|, e.g.,
wraps the argument into an additional effect |tm Γ src t → tm Γ tgt (M t)|,
and additionally performs a translation from terms form the source language with effect application |Each|
to terms of the target language using combinators |Pure|, |Map|, |Ap| and |Bind|.
This translation makes use of the smart constructors |AP| and |JOIN|,
that perform optimisations.

The |Var|, |App| and |Each| cases were discussed in \cref{sec:translation}:
The direct-style use of effects |Each| is expanded into effect operation |Bind|,
while variables are wrapped in |PURE|, and
function application is translated to applicative |Ap|.
The lambda and empty terms describe values and are simply wrapped into a pure as well.

In the case of projections and the case of tuples,
we follow the general pattern of the homomorphism law, e.g.,
we map both the function (projection, tuple) into a |Pure|
and we wrap all arguments in a |PURE|, and we apply them applicatively.

\subparagraph{Example.}
Assume our language contains an effectful operation |fetch|.
Then, translating the term
|e := Prd (Each (fetch "foo"), Each (fetch "bar"))| yields
|PURE e = Pure (Lam (λ e', (Lam (λ f', Prd (Var e', Var f'))))) `AP` (fetch "foo") `AP` (fetch "bar")|.

\begin{figure}
\captionof{lstlisting}{Span and Work.}\label{fig:span-work}
\begin{sublstlisting}{.5\linewidth}
\caption{Span.}\label{fig:span}
\begin{coqlisting}
Equations SPAN: ty → Type := | _ => nat.
Equations
span {B x} (e: tm SPAN B x): nat :=
| Var i  => 0 | Lam e  => 0
| Unt tt => 0 | Pure e => 0
| Fst  e     => span e
| Snd  e     => span e
| Prd  (e,f) => max (span e) (span f)
| App  e f   => max (span e) (span f)
| Ap   e f   => max (span e) (span f)
| Map  e f   => max (span e) (span f)
| Join e     => S (span e)
| Each e     => S (span e).
\end{coqlisting}
\end{sublstlisting}
\begin{sublstlisting}{.5\linewidth}
\caption{Work.}\label{fig:work}
\begin{coqlisting}
Equations WORK: ty → Type := | _ => nat.
Equations
work {B x} (e: tm WORK B x): nat :=
| Var i  => 0 | Lam e  => 0
| Unt tt => 0 | Pure e => 0
| Fst  e       => work e
| Snd  e       => work e
| Prd  (e,f)   => work e + work f
| App  e f     => work e + work f
| Ap  e f      => work e + work f
| Map  e f     => work e + work f
| Join e       => S (work e)
| Each e       => S (work e).
\end{coqlisting}
\end{sublstlisting}
\end{figure}

\subparagraph{Span and Work.}

We define span and work~(\cref{fig:span-work}),
which we use to express the degree of parallelism.
Span is the length of the longest chain of unhandled effectful operations,
i.e., the longer the path, the more operations need to run sequentially.
Hence, a shorter span for the same number of operations means a higher amount of parallelism.
Work is the sum of all unhandled effectful operations.
Just like evaluation interprets the value of a term, span and work are interpretations to a numeric value of a term.

As our syntax is defined from types, label and terms, we define these new interpretations as a type denotation, an effect denotation and a term denotation as well.
The effect denotation for span and work is the identity function,
and the type denotation is the constant function mapping all guest types to the type of natural numbers (|SPAN|, |WORK|).

More formally, we define the span of an expression to be zero for variables and values, such as empty and lambda, and for pure expressions.
The span of |Join| and direct-style effect application |Each| is one more (successor |S|) than the span of their argument.
For assertion and projection (access to first and second component),
the span is simply the span of its argument,
while the span of a tuple is the maximum of its left or right branch.
The span for function application, applicative application and mapping is the maximum of the span of its arguments as well, plus the span of the execution of the specified function on the argument.
However, we defined our static semantics such that direct-style effect application cannot be performed under a lambda, therefore the span of the execution of any function is zero.

Analogously,
we define the work of an expression to be zero for variables, values, and pure expressions.
Similar to the span, the work of |Join| and direct-style effect application |Each| is one more (successor |S|) than the work of their argument.
The work of assertion and projection is the work of its argument.
Other than span (which takes the maximum),
the work of a tuple is the sum of both arguments.
The work for function application, applicative application and mapping is the sum of the work of its arguments, plus the work of the execution of the specified function on the argument, which is zero, because lambdas cannot contain |Join| or |Each|.

\subparagraph{Example.}
Assume our language contains an effectful operation |fetch|.
We calculate the span and work of a term in the source language |e := Prd (Each (fetch "foo"), Each (fetch "bar"))| as follows:
|span e = 1| and |work e = 2|.
This expresses the fact that the two effects can be performed in parallel.
The corresponding target language term is\\
|e' := Pure (Lam (λ e', (Lam (λ f', Prd (Var e', Var f'))))) `AP` (fetch "foo") `AP` (fetch "bar")|.\\
We get the same results for this term:
|span e' = 1| and
|work e' = 2|.

\subsection{Proof}

Our translation should only change the encoding from direct-style to effect combinators, while the semantics, typability and parallelism of the term should be preserved.
We prove that our translation preserves typability, semantics, span and work.
Intuitively, the theorems hold,
because our translation performed by |PURE|, |AP|, and |JOIN|
are the functor, monad and applicative laws.

\begin{theorem}[PURE preserves types] \normalfont
  The translation function takes a well-typed term and produces a well-typed term, i.e.,
  \lstinline|PURE: ∀ t, tm Γ src t → tm Γ tgt (𝕄 t)|
\end{theorem}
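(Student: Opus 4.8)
The plan is to read the theorem as a statement about the Coq development: because we work with intrinsically typed syntax, the claim that |PURE| preserves types is exactly the claim that the |Equations| definition of |PURE| in \cref{fig:translation} elaborates at the declared signature |tm Γ src t → tm Γ tgt (𝕄 t)|. So the proof is a clause-by-clause check that each right-hand side inhabits the required type, under the induction hypothesis that the recursive calls on strict subterms already do. Since |PURE| recurses structurally on its single argument and never calls back through the auxiliary smart constructors, no separate termination argument is needed; in the mechanisation the theorem is discharged precisely by the successful elaboration of the |PURE| clauses, so on paper it suffices to walk through those clauses.

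First I would fix the signatures of the two auxiliary smart constructors that |PURE| invokes, namely |AP : tm Γ tgt (𝕄 (s ⇒ t)) → tm Γ tgt (𝕄 s) → tm Γ tgt (𝕄 t)| and |JOIN : tm Γ tgt (𝕄 (𝕄 t)) → tm Γ tgt (𝕄 t)|, each of which again holds by construction: the clauses of |AP| build their result from |Pure|, |Map| or |Ap|, whose constructor types (\cref{fig:tm}) line up in every case, and |JOIN| either returns |Join e| or, in its |Pure e| clause, coerces a |com|-labelled term to a |tgt|-labelled one. The single reusable auxiliary fact here is the relabelling remark accompanying \cref{fig:tm}: a |com|-labelled term embeds into |tm Γ B t| for any label |B|, which is itself a trivial structural recursion over terms (the body of a |Lam| already lives in |com|, and every other common constructor is label-polymorphic).

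With these in place the cases of |PURE| are routine. The value clauses |Var i|, |Unt tt| and |Lam j| return a |Pure|-wrapped term, viewing the argument as a |com| term (legal because |Var| and |Unt| are label-polymorphic and the body of |Lam| is already in |com|) and using |Pure : tm Γ com t → tm Γ tgt (𝕄 t)|. The clause |App e f| returns |AP (PURE e) (PURE f)|, and the induction hypothesis gives |PURE e : tm Γ tgt (𝕄 (s ⇒ t))| and |PURE f : tm Γ tgt (𝕄 s)|, matching the domain of |AP| and yielding |tm Γ tgt (𝕄 t)|. The clause |Each e| returns |JOIN (PURE e)| with |PURE e : tm Γ tgt (𝕄 (𝕄 t))|, matching |JOIN|. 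The homomorphism-style clauses |Fst e|, |Snd e| and |Prd (e,f)| apply |AP| (iterated once for |Prd|) to a literal pure combinator such as |Pure (Λ e', Fst (Var e'))|; here one checks that these small PHOAS abstractions are well typed, which is immediate from the signatures of |Fst|, |Snd|, |Prd| and |Var|, after which the guest arrow types compose so that the result again lands in |tm Γ tgt (𝕄 t)| for the appropriate |t|.

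There is no substantial obstacle: this is the mildest of the preservation results the paper establishes, exactly because intrinsic typing forces it. The only point needing a moment of attention is the repeated movement of terms across the |com| and |tgt| labels (|Pure| expects a |com| argument, |Lam| bodies are |com|, and |JOIN| relabels in its |Pure| clause); once the embedding of |com|-labelled terms into an arbitrary label is available, every clause of |PURE|, |AP| and |JOIN| typechecks on the nose.
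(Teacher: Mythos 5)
Your proposal is correct and takes essentially the same approach as the paper: the theorem is discharged by the observation that, with intrinsically typed syntax, the successful elaboration of the \lstinline|PURE| definition at its declared signature \emph{is} the type-preservation proof. The paper states this in one sentence; your clause-by-clause walkthrough (including the signatures of \lstinline|AP| and \lstinline|JOIN| and the \lstinline|com|-to-\lstinline|tgt| relabelling) simply spells out what the Coq type checker verifies automatically.
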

\begin{proof}
Using intrinsically-typed representation of terms,
the well-typedness of the translated term is guaranteed by the fact that
the definition of the translation function |PURE| is itself well-typed in Coq.
\end{proof}

We now consider the preservation of semantics.
First, we show that the semantics of the smart constructors is equal to that of the normal constructors,
so that they merely represent optimizations of those.
\begin{lemma}[AP respects semantics]\label{lem:eval-ap} \normalfont
  \lstinline|∀ f e, eval tgt (AP f e) = eval tgt (Ap f e)|
\end{lemma}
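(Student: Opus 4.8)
The plan is to proceed by case analysis on the two arguments $f$ and $e$ of |AP|, following exactly the four-way split in the definition of |AP| in \cref{fig:translation}. Unfolding |eval tgt| on the right-hand side, |eval tgt (Ap f e)| reduces to |M.(ap) (eval tgt f) (eval tgt e)| in every case, so the task is to show that the left-hand side |eval tgt (AP f e)| reduces to the same thing. The four cases are: (1)~$f = $ |Pure f'|, $e = $ |Pure e'|; (2)~$f = $ |Pure f'|, $e$ arbitrary (non-pure); (3)~$f$ arbitrary, $e = $ |Pure e'|; and (4)~both arbitrary. Case~(4) is immediate since |AP f e| reduces to |Ap f e| definitionally. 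The remaining three cases are each a single application of one of the |LawfulMonad| equations from \cref{fig:lawfulmonad}, read from right to left.

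Concretely, I would first handle case~(2): here |AP (Pure f') e = Map (Lam (λ x, App f' (Var x))) e|, so the left side evaluates to |M.(map) (λ x, (eval tgt f') x) (eval tgt e)| (using that |eval| of a |Lam| is the corresponding host function, and |eval| of |App| and |Var| are host application and the bound variable), while the right side is |M.(ap) (M.(pure) (eval tgt f')) (eval tgt e)|; these are equated by |apl| (the |ap (pure f) x = map (λ x', f x') x| law). Case~(3) is symmetric, using |apr| instead of |apl|: |AP f (Pure e') = Map (Lam (λ x, App (Var x) e')) f| evaluates to |M.(map) (λ g, g (eval tgt e')) (eval tgt f)|, matched against |M.(ap) (eval tgt f) (M.(pure) (eval tgt e'))| by |apr|. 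Case~(1): |AP (Pure f') (Pure e') = Pure (App f' e')| evaluates to |M.(pure) ((eval tgt f') (eval tgt e'))|, matched against |M.(ap) (M.(pure) (eval tgt f')) (M.(pure) (eval tgt e'))|; one can either apply |apl| followed by |aplr| (the free theorem of pure, |map f (pure x) = pure (f x)|), or equivalently |apr| then |aplr|.

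In a mechanised setting, each case is discharged by |simp eval|, |simp AP| to unfold the definitions, followed by |rewrite| with the appropriate law (|apl|, |apr|, |aplr|), so the whole proof is short. The one point requiring mild care is the bookkeeping around PHOAS binders: one must check that |eval| applied to the freshly introduced |Lam (λ x, …)| and |Var x| terms in the |Map| cases really does compute to the expected host-level lambda, i.e. that |eval tgt (Lam k) = eval tgt · k| fires and the |Var|/|App| layers collapse correctly. This is routine once the |eval| equations for |com|/|tgt| terms are in scope, so I do not anticipate a genuine obstacle; the main (minor) subtlety is simply ensuring the rewrite is oriented the correct way, since the |LawfulMonad| laws are stated in the direction used by the basic translation and here they are applied in reverse to justify the optimisation.
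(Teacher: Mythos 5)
Your proposal is correct and follows essentially the same route as the paper's proof, which is simply ``by case distinction on the term structure of the arguments, using the functor, monad and applicative laws''; your four-way split mirrors the clauses of \lstinline|AP| exactly, and the laws you invoke (\lstinline|apl|, \lstinline|apr|, \lstinline|aplr| from \lstinline|LawfulMonad|) are the right ones in each case. The extra detail you give about unfolding \lstinline|eval| through the PHOAS \lstinline|Lam|/\lstinline|Var|/\lstinline|App| layers is accurate and is precisely the routine bookkeeping the paper leaves implicit.
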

\begin{lemma}[JOIN respects semantics]\label{lem:eval-join} \normalfont
  \lstinline|∀ f e, eval tgt (JOIN e) = eval tgt (Join e)|
\end{lemma}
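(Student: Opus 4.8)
The plan is to prove the lemma by the same strategy one would use for \cref{lem:eval-ap}: a case split that mirrors the two clauses defining the smart constructor |JOIN|. Two preliminary remarks. First, the displayed quantifier |∀ f e| is a slip — |JOIN| is unary — so the statement is really |∀ e, eval tgt (JOIN e) = eval tgt (Join e)|. Second, although only |Monad m| appears in the type of |eval|, the proof needs a |LawfulMonad m| instance in scope, since it invokes the monad laws.

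Concretely, I would run |funelim (JOIN e)| (equivalently, a dependent case analysis on whether |e| is syntactically of the form |Pure e'|), which produces exactly two goals. In the default branch, where |e| is not a |Pure|, we have |JOIN e = Join e| definitionally, so |simp JOIN| followed by |reflexivity| discharges the goal; this is the trivial half.

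The substantive branch is |e = Pure e'| with |e': tm Γ com (𝕄 t)|, where |JOIN (Pure e') = to e'| and |to| is the relabelling coercion |tm Γ com s → tm Γ tgt s| mentioned after the definition of |eval|. Unfolding the right-hand side gives |eval tgt (Join (Pure e')) = bind id (pure (eval com e'))|, which the left-unit law |idl| (with |f := id|) rewrites to |eval com e'|. The goal thus reduces to |eval tgt (to e') = eval com e'|, i.e., relabelling a common term as a target term leaves its denotation unchanged. This is a general fact, provable by structural induction on the term, since |eval| sends each term former to the same host-level construct at the labels |com| and |tgt| (both |EF m com| and |EF m tgt| are the identity functor); if it is not already available as a lemma in the development, it must be established first, but the induction is entirely routine.

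I expect the only friction to be bookkeeping rather than mathematics: coaxing |Equations|/|funelim| into delivering clean subgoals through the dependent type indices of |tm|, and having the relabelling-preserves-semantics lemma at hand. With those in place, the |Pure| case is a single rewrite by |idl| composed with the relabelling lemma, and the other case is reflexivity.
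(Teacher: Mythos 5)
Your proposal is correct and matches the paper's own argument: a case distinction on the term structure of the argument, with the non-\lstinline|Pure| case holding definitionally and the \lstinline|Pure| case discharged by the left-unit monad law together with the fact that relabelling a \lstinline|com| term into \lstinline|tgt| preserves its denotation (which the paper isolates as \cref{lem:eval-relabel}). Your side remarks — that the quantifier over \lstinline|f| is spurious and that a \lstinline|LawfulMonad| instance is needed — are both accurate.
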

\begin{proof}
By case distinction on the term structure of the arguments, using the functor, monad and applicative laws.
\end{proof}

Next, we see that embedding the pure |com| sublanguage in the target language preserves
the semantics:
\begin{lemma}[relabel preserves semantics]\label{lem:eval-relabel} \normalfont
  \lstinline|∀ e, eval tgt (relabel e) = eval com e|
\end{lemma}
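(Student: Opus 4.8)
The plan is to prove this by structural induction on the common term |e|, after unfolding the definition of |relabel|. We recall that |relabel| is the evident structural recursion that rewrites the language label of each common term former (|Var|, |Unt|, |Prd|, |Fst|, |Snd|, |App|, |Lam|) to |tgt| and recurses into the immediate subterms; in particular, since in our PHOAS encoding the body of a |Lam| always carries the label |com| regardless of the ambient label, |relabel| leaves lambda bodies untouched and performs no recursion there. Note also that the statement even typechecks because |EF m com| and |EF m tgt| both unfold to the identity, so both sides inhabit |EVAL m t|.

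The key observation driving the proof is that the clauses of |eval| for the common term formers coincide for the labels |com| and |tgt|: they are precisely the catch-all clauses of \cref{fig:tm-eval}, e.g.\ |Var i => i|, |App e f => (eval e) (eval f)|, |Fst e => (eval e).1|, and so on. Hence, in each case of the induction, simplifying |eval tgt (relabel e)| and |eval com e| by the defining equations of |relabel| and |eval| reduces both sides to the same host expression up to the recursive calls on the immediate subterms, which are discharged by the induction hypotheses; the base cases |Var i| and |Unt tt| close by reflexivity.

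I expect the only case that warrants a second look is |Lam|. There |relabel (Lam k)| is syntactically |Lam k| (the body |k| is unchanged), and both |eval com (Lam k)| and |eval tgt (Lam k)| unfold to the composition of |eval| taken at label |com| with |k|; so the equality holds without even using the induction hypothesis, at worst after an appeal to functional extensionality to compare the two function values pointwise on |EVAL m s|. There is no genuine mathematical obstacle here---the lemma merely records that |relabel| is a denotational no-op---and the whole argument is discharged in Coq by an induction followed by simplification with the generated |Equations| rewrite lemmas (together with |f_equal| and extensionality in the function-typed cases).
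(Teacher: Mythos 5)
Your proposal is correct and follows the same route as the paper, which likewise proves the lemma by induction on the structure of the term, with each common term former evaluating identically under the \lstinline|com| and \lstinline|tgt| labels and the \lstinline|Lam| case being trivial because lambda bodies are always labeled \lstinline|com|. Your elaboration of why each case closes (and the remark that extensionality is at worst needed, not essential, in the \lstinline|Lam| case) is accurate but adds no new idea beyond the paper's one-line argument.
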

\begin{proof}
By induction on the structure of |e|.
\end{proof}

From this, we can deduce that the |PURE| transformation preserves the semantics of the source
program.
\begin{theorem}[PURE preserves semantics]\label{lem:eval-pure}
  For all lawful monads |M| to be evaluated in,\\
  \lstinline|∀ e, eval tgt (PURE e) = eval src e|
\end{theorem}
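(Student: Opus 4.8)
The plan is to fix an arbitrary lawful monad |M| (discharging the ``to be evaluated in'' quantifier) and then argue by induction on the structure of the source term |e| --- equivalently, by the functional induction principle that |Equations| derives for |PURE|, which is well-founded because |PURE| recurses only on structurally smaller |src| subterms. In each case I unfold the matching clause of |PURE| and of |eval src| / |eval tgt| and close the goal using \cref{lem:eval-ap}, \cref{lem:eval-join}, and the appropriate fields of |LawfulMonad|.

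The value formers |Var i|, |Unt tt|, and |Lam j| are the base cases: here |PURE| emits |Pure| applied to a |com| term, and after unfolding, both sides reduce definitionally to |M.(pure)| of the same host value (namely |i|, |tt|, and |eval com · j|). The |Lam| case is the only place where the |com| / |src| boundary is in play, and it needs no relabelling because PHOAS lets the body |j : Γ s → tm Γ com t| be shared verbatim between |Lam j : tm Γ src (s ⇒ t)| and |Lam j : tm Γ com (s ⇒ t)|; both |eval src (Lam j)| and |eval tgt (Pure (Lam j))| are then |M.(pure) (eval com · j)| by definition.

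For |App e f|, the clause |PURE (App e f) = PURE e `AP` PURE f| together with \cref{lem:eval-ap} rewrites the left-hand side to |M.(ap) (eval tgt (PURE e)) (eval tgt (PURE f))|, and the two induction hypotheses turn this into |M.(ap) (eval e) (eval f)|, which is the |eval src| clause for |App|. The cases |Fst e|, |Snd e|, and |Prd (e, f)| follow the same recipe, except that the leading applicative factor is a |Pure| of a projection (resp.\ pairing) function; after rewriting with \cref{lem:eval-ap} at each |AP| node and applying the induction hypotheses, one rewrite with the |apl| law |ap (pure f) x = map (λ x', f x') x| collapses the leading |ap|-of-a-|pure| back into a |map|, which is exactly the shape of the corresponding |eval src| clause, the residual $\eta$/$\beta$-redexes being settled definitionally. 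Finally |PURE (Each e) = JOIN (PURE e)|, so \cref{lem:eval-join} rewrites the goal to |M.(bind) id (eval tgt (PURE e))|, and the induction hypothesis makes it coincide with |eval src (Each e) = M.(bind) id (eval e)|. (Inside the proof of \cref{lem:eval-join} this is the point at which \cref{lem:eval-relabel} is used, to treat the |Pure e => to e| branch of |JOIN|; at the level of this theorem that lemma is already available.)

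I expect the main obstacle to be administrative rather than mathematical: keeping the host-level functions that appear under |Pure| after translation aligned with the ones hard-wired into the |eval src| clauses --- they agree only up to $\eta$ and $\beta$ --- and arranging each case so that it closes with a \emph{single} law rewrite once \cref{lem:eval-ap} and \cref{lem:eval-join} have made the smart constructors transparent. In the Coq development this amounts to |simp| unfolding the |Equations| definitions followed by targeted |rewrite|s with the monad and applicative laws; there is no genuinely deep step, which is unsurprising since |PURE|, |AP|, and |JOIN| were designed to be nothing more than oriented instances of those very laws.
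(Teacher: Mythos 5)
Your proposal is correct and matches the paper's proof, which is exactly an induction on the structure of \lstinline|e| discharged case-by-case via \cref{lem:eval-ap}, \cref{lem:eval-join}, and \cref{lem:eval-relabel} (the last entering through the \lstinline|Pure|-branch of \lstinline|JOIN|, as you note). Your case analysis, including the observation that the \lstinline|Lam| body is already a \lstinline|com| term shared verbatim and that the \lstinline|Fst|/\lstinline|Snd|/\lstinline|Prd| cases close with a single \lstinline|apl| rewrite, is a faithful expansion of what the mechanisation does.
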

\begin{proof}
By induction on the structure of |e|, using \cref{lem:eval-ap,lem:eval-join,lem:eval-relabel}.
\end{proof}

We now want to show that |PURE| preserves the work and span of the program.
This is similar to semantics preservation, except that the functions we consider
map to a monoid (the natural numbers with addition and maximum, respectively)
rather than a monad.

We show that |AP| and |JOIN| do not increase the span and work of a term, compared to
the normal constructors.
\begin{lemma}[AP respects span and work]\label{lem:span-work-ap} \normalfont \quad \\
|∀ f e,| ~~ |span tgt (AP f e) <= span tgt (Ap f e)| ~~and~~ |work tgt (AP f e) <= work tgt (Ap f e)|
\end{lemma}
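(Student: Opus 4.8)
The plan is to prove both inequalities by the very same four-way case analysis that defines |AP| --- that is, to split on whether the function argument |f| and the argument |e| are syntactically of the form |Pure _| --- and in each branch to unfold the relevant clause of |AP| together with the clauses of |span| (resp.\ |work|) for whatever term formers appear on the two sides. In every branch both sides will reduce to the same numeric expression, so the stated inequalities in fact hold as \emph{equalities}; the |<=| is merely the weakening that is convenient for the downstream theorem. Since the statement is about |tm SPAN| (resp.\ |tm WORK|), the generic type parameter |Γ| of |AP| is first instantiated at |SPAN| (resp.\ |WORK|); nothing else depends on the type indices.

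Concretely, the four branches go as follows. If |f = Pure f'| and |e = Pure e'|, then |AP f e = Pure (App f' e')|, whose span and work are |0| by the |Pure| clause, while |Ap (Pure f') (Pure e')| has span |max 0 0 = 0| and work |0 + 0 = 0|. If |f = Pure f'| and |e| is not pure, then |AP f e = Map (Lam _) e|; the span of a |Map| is the max of the spans of its arguments and the span of a |Lam| is |0|, so this is |span e|, and |span (Ap (Pure f') e) = max 0 (span e) = span e| as well --- and symmetrically for |work| using that |0| is a unit for |+|. The case |f| arbitrary, |e = Pure e'| is the mirror image. In the catch-all branch |AP f e| is literally |Ap f e|, so the two sides are syntactically identical.

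In Coq this is |funelim (AP f e)| (or an explicit |destruct| on the head constructors of |f| and |e|), followed by unfolding |span|/|work| via their |Equations| and discharging the |max|/|+| arithmetic with |lia|. I do not expect a genuine obstacle: idempotence and monotonicity of |max| together with |0| being a unit for |+| make every branch collapse to an equality, so the only care needed is to ensure the simplifier fully unfolds |span|/|work| through the |Pure|, |Lam|, |Map| and |App| layers produced in the first three branches. This lemma is the span/work counterpart of \cref{lem:eval-ap} and is used, exactly as \cref{lem:eval-ap} is used for semantics, in the proof that |PURE| preserves span and work.
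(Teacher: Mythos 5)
Your proposal is correct and follows the same route as the paper: a case distinction on the head constructors of \lstinline|f| and \lstinline|e| mirroring the four clauses of \lstinline|AP|, with each branch discharged by unfolding \lstinline|span|/\lstinline|work| and the monoid facts that \lstinline|0| is a unit for \lstinline|max| and \lstinline|+| (indeed every branch closes as an equality, so the \lstinline|<=| is just convenient weakening). Your branch-by-branch arithmetic checks out against the definitions in the paper.
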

\begin{lemma}[JOIN respects span and work]\label{lem:span-work-join} \normalfont \quad \\
|∀ e,| ~~ 
|span tgt (JOIN e) <= span tgt (Join e)| ~~and~~ |work tgt (JOIN e) <= work tgt (Join e)|
\end{lemma}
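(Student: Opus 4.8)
The plan is to prove both inequalities by a case distinction on the argument |e|, following the two clauses in the definition of |JOIN| (\cref{fig:translation}); |JOIN| is not recursive, so this is a plain two-way case split with no induction on its structure. In the default clause, |e| is not of the form |Pure _|, and |JOIN e| reduces definitionally to |Join e|; then |span tgt (JOIN e) = span tgt (Join e)| and |work tgt (JOIN e) = work tgt (Join e)|, so both inequalities hold with equality.

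The only interesting clause is |e = Pure e'| with |e' : tm Γ com (𝕄 t)|, where |JOIN (Pure e')| reduces to |to e'|, the coercion of the common-sublanguage term |e'| into the target language (an instance of |relabel|). Here I would first establish, by a routine induction on the term, that relabelling a common term yields a term whose span and work are both |0|: this holds because the only formers that increase span or work --- namely |Each| and |Join| --- are unavailable at label |com|, so the relabelling only traverses the common term formers, none of which contributes anything. Hence both |span tgt (to e')| and |work tgt (to e')| are |0|, whereas on the other side |span tgt (Join (Pure e')) = S (span tgt (Pure e')) = S 0| and likewise |work tgt (Join (Pure e')) = S 0|, using |span tgt (Pure _) = 0| and |work tgt (Pure _) = 0|. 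Since |0 <= S 0|, both inequalities follow, in fact with slack.

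The proof is largely definitional unfolding; the only point that needs care is recognising that in the |Pure| clause the right-hand side is a relabelled common term, so bounding its span and work requires the auxiliary induction above rather than any property of |JOIN| itself --- alternatively one can reuse a span/work analogue of \cref{lem:eval-relabel} together with the observation that |com|-labelled terms have zero span and work. The companion \cref{lem:span-work-ap} is discharged in the same style, by case analysis against the four clauses of |AP|: in each case the two sides reduce, by definitional computation together with |span tgt (Lam _) = 0|, |work tgt (Lam _) = 0|, |span tgt (Pure _) = 0| and |work tgt (Pure _) = 0|, to an equality (or, at worst, a trivial inequality) of span and work.
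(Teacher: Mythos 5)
Your proof is correct and takes essentially the same route as the paper's: a case distinction on the clauses of |JOIN|, with the non-|Pure| clause holding by definitional equality and the |Pure| clause reducing to |0 <= S 0|. The one detail you rightly flag --- that bounding the span and work of the relabelled common term |to e'| needs its own induction --- is exactly what the paper factors out as \cref{lem:com-free} and \cref{lem:relabel-free}, even though its one-line proof of this lemma does not cite them explicitly.
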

\begin{proof}
By case distinction on the term structure of the arguments, using the monoid laws.
\end{proof}

The pure terms in the |com| sublanguage are effect-free; therefore, their span and work is equal to 0.
\begin{lemma}[com is effect-free]\label{lem:com-free} \normalfont
|∀ e, span com e = 0| ~~and~~ |work com e = 0|
\end{lemma}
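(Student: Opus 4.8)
The plan is to prove both equalities by a single structural induction on the term |e|; the argument for |span| and for |work| is identical except that |max| is replaced by |+|. The key observation is that, since |e| inhabits the common sublanguage — |e: tm SPAN com x| for the span claim and |e: tm WORK com x| for the work claim — the intrinsically-typed syntax admits only the term formers |Var|, |Unt|, |Lam|, |Prd|, |Fst|, |Snd| and |App|. The five effectful constructors |Each|, |Pure|, |Join|, |Map| and |Ap| all produce terms labelled |src| or |tgt|, never |com|, so their cases are vacuous and are discharged automatically by dependent pattern matching in Coq. Note in particular that the body of a |Lam| is always a |com| term by the typing of that constructor, which is precisely why no effect can be hidden underneath a binder.

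For the remaining constructors the computation is immediate. For |Var i|, |Unt tt| and |Lam j|, |span| and |work| are defined to be |0| outright, so there is nothing to recurse into — in particular we never have to reason underneath the PHOAS binder of |Lam|. For |Fst e| and |Snd e| the value equals that of the immediate subterm |e|, which is again a |com| term, hence |0| by the induction hypothesis. For |Prd (e, f)| and |App e f| we obtain |max (span e) (span f)| (resp.\ |work e + work f|); applying the induction hypotheses to |e| and |f| rewrites these to |max 0 0| and |0 + 0|, which reduce to |0|.

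I do not expect any genuine obstacle here: the lemma is essentially a well-formedness check confirming that the |com| label denotes exactly the effect-free fragment of the syntax. The only point requiring mild care is setting up the dependent case analysis so that Coq recognises the index |com| as incompatible with the five effectful constructors; once that is in place, the proof is a one-line induction followed by trivial arithmetic. Together with \cref{lem:span-work-ap,lem:span-work-join}, this lemma then feeds into the proof that |PURE| preserves — indeed does not increase — the span and work of a term.
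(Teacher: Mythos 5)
Your proof is correct and follows the same route as the paper, which simply states ``by induction on the term structure of $e$''; your elaboration of why the effectful constructors are excluded by the \lstinline|com| index and why the \lstinline|Lam| case needs no recursion under the binder is exactly the right reading of the intrinsically-typed syntax and the definitions of \lstinline|span| and \lstinline|work|.
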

\begin{proof}
By induction on the term structure of |e|.
\end{proof}

Embedding pure terms into the targe language produces a term that does not perform any effects, either.
\begin{lemma}[relabeled terms remain effect-free]\label{lem:relabel-free} \normalfont \quad \\
|∀ e,| ~~ 
|span tgt (relabel e) = 0| ~~and~~ |work tgt (relabel e) = 0|
\end{lemma}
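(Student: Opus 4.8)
The plan is to reduce this statement to \cref{lem:com-free} via the structural observation that |relabel| changes neither the span nor the work of a term. Concretely, I would first establish, in direct analogy to \cref{lem:eval-relabel}, the auxiliary fact that |∀ e, span tgt (relabel e) = span com e| and |∀ e, work tgt (relabel e) = work com e|. The stated lemma then follows by rewriting with this fact and then with \cref{lem:com-free}, which already tells us both quantities vanish on the |com| sublanguage.

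I would prove the auxiliary fact by a plain structural induction on |e: tm Γ com t|. The key point is that |relabel| is purely structural: it sends each common term former to the syntactically identical target former applied to the relabelled subterms, and — because the |Lam| constructor always takes a |com|-labelled body regardless of the ambient label — the lambda case is literally |relabel (Lam k) = Lam k|, so no recursion under the PHOAS binder is ever needed. Since the |com| sublanguage contains none of the effect-carrying constructors (no |Each|, and no |Pure|, |Map|, |Ap|, or |Join|), every node occurring in |relabel e| is one of |Var|, |Unt|, |Lam|, |Prd|, |Fst|, |Snd|, |App|, and for each of these the defining clauses of |span| (resp.\ |work|) are label-independent: they return |0| at |Var|, |Unt|, |Lam|, recurse unchanged at |Fst| and |Snd|, and take the |max| (resp.\ sum) of the subterms at |Prd| and |App|. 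Applying the induction hypothesis in the binary cases closes the argument; the remaining constructors (|Each|, |Pure|, |Ap|, |Map|, |Join|) are ruled out automatically by the intrinsic typing of |tm|.

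I do not expect a real obstacle here: this is a routine induction whose two sides are governed by matching defining equations. The only points requiring care are making the case analysis on the intrinsically typed |com| term be recognised as exhaustive over exactly the common term formers (so that Coq discharges the effectful constructors purely by typing), and, as noted, observing that the |Lam| case is trivial so that descending under binders is never an issue. With the auxiliary fact in hand, the stated lemma is a two-line rewrite.
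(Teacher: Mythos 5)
Your proof is correct and takes essentially the same approach as the paper, which likewise argues by structural induction on the term structure of the common term |e|. The only (harmless) difference is that you factor the induction through the intermediate equality |span tgt (relabel e) = span com e| and then invoke \cref{lem:com-free}, whereas the paper runs the induction directly; both reduce to the same routine case analysis over the label-independent |com| constructors, with |Lam| trivial and the effectful constructors excluded by the intrinsic typing.
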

\begin{proof}
By induction on the term structure of |e|.
\end{proof}

We can then show that the translation |PURE| does not increase the span or work of the 
source program, thereby demonstrating that it is parallelism-preserving.
\begin{theorem}[PURE preserves span and work]\label{lem:span-work-pure} \normalfont \quad \\
|∀ e,| ~~ 
|span tgt (PURE e) ≤ span src e| ~~and~~ |work tgt (PURE e) ≤ work src e|
\end{theorem}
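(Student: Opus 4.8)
The plan is to prove both inequalities simultaneously by structural induction on the source term \lstinline|e: tm SPAN src x| (resp. \lstinline|tm WORK src x|), following exactly the same case analysis as the definition of \lstinline|PURE| in \cref{fig:translation}, and appealing to the auxiliary lemmas \cref{lem:span-work-ap,lem:span-work-join,lem:relabel-free} at each step. Since \lstinline|span| and \lstinline|work| have the same recursive shape (differing only in \lstinline|max| versus \lstinline|+| on the branching constructors), I would actually carry out one induction whose inductive hypothesis is the conjunction of the two bounds, and handle the monoid operation generically wherever possible; the only places the two differ is in the \lstinline|Prd| case, and there both \lstinline|max| and \lstinline|+| are monotone, so the argument goes through uniformly.

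First I would dispatch the leaf cases. For \lstinline|Var i|, \lstinline|Unt tt|, and \lstinline|Lam j|, the translation produces \lstinline|Pure (…)|, whose \lstinline|span| and \lstinline|work| are both \(0\) by definition, so the inequalities \(0 \le \mathtt{span}\ \mathtt{src}\ e\) and \(0 \le \mathtt{work}\ \mathtt{src}\ e\) hold trivially (note that the body of the \lstinline|Lam| is in the \lstinline|com| sublanguage and hence contributes nothing — this is essentially \cref{lem:com-free}). Next, the \lstinline|App e f| case: here \lstinline|PURE (App e f) = PURE e `AP` PURE f|, so by \cref{lem:span-work-ap} we have \(\mathtt{span}\ \mathtt{tgt}\ (\mathtt{PURE}\ e \mathbin{\mathtt{`AP`}} \mathtt{PURE}\ f) \le \mathtt{span}\ \mathtt{tgt}\ (\mathtt{Ap}\ (\mathtt{PURE}\ e)\ (\mathtt{PURE}\ f)) = \max(\mathtt{span}\ \mathtt{tgt}\ (\mathtt{PURE}\ e),\ \mathtt{span}\ \mathtt{tgt}\ (\mathtt{PURE}\ f))\), and then the two inductive hypotheses together with monotonicity of \lstinline|max| bound this by \(\max(\mathtt{span}\ \mathtt{src}\ e,\ \mathtt{span}\ \mathtt{src}\ f) = \mathtt{span}\ \mathtt{src}\ (\mathtt{App}\ e\ f)\); the \lstinline|work| case is identical with \lstinline|+|.

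The \lstinline|Fst e|, \lstinline|Snd e|, and \lstinline|Prd (e,f)| cases are structurally the same as \lstinline|App|, except that the translation inserts an extra \lstinline|Pure (Λ …)| head applied via \lstinline|`AP`|. For \lstinline|Fst e|, \lstinline|PURE (Fst e) = Pure (Λ e', Fst (Var e')) `AP` PURE e|, and \cref{lem:span-work-ap} bounds its span by \(\max(\mathtt{span}\ \mathtt{tgt}\ (\mathtt{Pure}\ (\dots)),\ \mathtt{span}\ \mathtt{tgt}\ (\mathtt{PURE}\ e)) = \max(0, \mathtt{span}\ \mathtt{tgt}\ (\mathtt{PURE}\ e))\), which by the inductive hypothesis is at most \(\mathtt{span}\ \mathtt{src}\ e = \mathtt{span}\ \mathtt{src}\ (\mathtt{Fst}\ e)\). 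The \lstinline|Prd| case is the one that genuinely uses two applications of \cref{lem:span-work-ap} (peeling off \lstinline|`AP` PURE f| then \lstinline|`AP` PURE e|) and then combines the bounds with \lstinline|max| for span and \lstinline|+| for work — here it matters that \lstinline|span (Prd (e,f)) = max (span e) (span f)| and the pure head contributes \(0\). Finally, the \lstinline|Each e| case: \lstinline|PURE (Each e) = JOIN (PURE e)|, so by \cref{lem:span-work-join} the span is at most \(\mathtt{span}\ \mathtt{tgt}\ (\mathtt{Join}\ (\mathtt{PURE}\ e)) = \mathtt{S}\ (\mathtt{span}\ \mathtt{tgt}\ (\mathtt{PURE}\ e))\), and by the inductive hypothesis and monotonicity of \lstinline|S| this is at most \(\mathtt{S}\ (\mathtt{span}\ \mathtt{src}\ e) = \mathtt{span}\ \mathtt{src}\ (\mathtt{Each}\ e)\); \lstinline|work| is again identical.

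The main obstacle I anticipate is not any single case but rather getting the bookkeeping right for the smart constructors: the inequalities in \cref{lem:span-work-ap,lem:span-work-join} are stated as \lstinline|<=| rather than equalities precisely because \lstinline|AP| and \lstinline|JOIN| can collapse a \lstinline|Pure| head and genuinely \emph{reduce} the span/work, so one must be careful that the chain of \(\le\)'s composes in the right direction and that the \lstinline|Pure (Λ …)| heads introduced in the \lstinline|Fst|/\lstinline|Snd|/\lstinline|Prd| cases really do have span and work zero (which is immediate from the \lstinline|Pure| clause of \lstinline|span|/\lstinline|work|, since their span/work is defined to be \(0\) regardless of the enclosed \lstinline|com| term). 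In the mechanisation this is a routine induction with \lstinline|simp| unfolding the \lstinline|Equations| definitions of \lstinline|PURE|, \lstinline|span|, \lstinline|work| followed by \lstinline|lia| after rewriting with the two smart-constructor lemmas; on paper it is a half-page case split with no surprises.
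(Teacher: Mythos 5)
Your proposal is correct and follows exactly the route the paper takes: structural induction on \lstinline|e| mirroring the case analysis of \lstinline|PURE|, discharging the smart-constructor steps with \cref{lem:span-work-ap,lem:span-work-join} and the pure/relabelled subterms with \cref{lem:com-free,lem:relabel-free}, and closing each case with monotonicity of \lstinline|max| and \lstinline|+|. The paper states this only as a one-line proof sketch; your write-up is a faithful (and more detailed) elaboration of it, with no gaps.
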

\begin{proof}
By induction on the term structure of |e|, using the monoid laws of addition and maximum
as well as \cref{lem:span-work-ap,lem:span-work-join,lem:com-free,lem:relabel-free}.
\end{proof}

\section{Implementation}
\label{sec:implementation}

In this section, we describe the differences and similarities
between the mechanisation in Coq and the implementation in Scala built on a macro-based AST transformation.

\subparagraph{Structural Recursion.}
We keep our implementation in a general-purpose language as close to the formal model of our core calculus as possible.
To this end, our implementation follows the formal translation as a structurally recursive function over the terms where possible.
We use Scala macros to get access to code as AST data type, similar to the |tm| data type in the formalization.

\subparagraph{Type-preserving Compilation.}
In Scala, we process the untyped AST for fine-grained detailed manipulations.
Knowing that the translation is typability-preserving by our Coq proof,
increases confidence in the implementation.

\subparagraph{Exhaustiveness Checks.}
A difference between the Coq and the Scala implementation is that
|Bind|, |Ap| and |Pure| are not syntax forms in Scala
but represented by variable and function application in the embedding.
Still, we can treat them as syntax forms to construct and destruct
by defining custom patterns for pattern matching.
Further, Scala macros do not define the Scala syntax as an algebraic data type
(to hide compiler internals),
and therefore do not offer exhaustiveness checks.
Yet, by the fact that the Coq implementation is total, the Scala implementation can be expected to be as well.

\subparagraph{Custom Effects.}
In the formalization,
we have only a single effect,
while, in the implementation, we allow every use of the notation
to be instantiated with a different effect, based on the type of the expression.
Our macro inspects the expression's type and,
based on this type, picks the corresponding generated combinators |bind|/|ap|/|pure|
of the respective effect.

\subparagraph{Arity.}
In Scala, functions may take multiple arguments.
Generally, we can model functions taking multiple arguments as functions
taking a single argument of a tuple with multiple fields
with appropriate currying and uncurrying.
Functions with multiple arguments make the compiler no longer structurally recursive over terms
because, besides terms, the compiler additionally needs to mutually recurse over the list of arguments, 
which would complicate the proof, but is necessary for our implementation.

\section{Related work}
\label{sec:related-work}

\subparagraph{Do-Notation.}
Do-notations have been popular for studying a variety of styles
for writing effectful code:
Wadler extends list-comprehension syntax~\cite{Wadler92}
to monadic comprehensions, from which modern do-notation sprung, and
McBride introduced applicatives and idiom brackets as a notation for applicatives~\cite{McbrideP08}.
To the best of our knowledge,
the only support for mixed sequential and parallel programming
was introduced as a Haskell extension~\citet{MarlowBCP14, MarlowJKM16}
to optimise do-notation into mixed monadic/applicative operations (|ApplicativeDo|).
In contrast, our notation preserves the parallelism inherent in the structure of the program, thereby allowing sequentiality where necessary
and giving parallelism where possible.

\subparagraph{Implementations.}
Besides theory,
implementations for effectful guest language notations are a popular endeavor,
for example:
In Scala, we can find projects to supports effectful programs
through compiler plugins such as
coroutines \cite{scala-coroutines},
Scala async \cite{scala-async},
Monadless~\cite{monadless},
Effectfull~\cite{effectful},
Scala Workflow~\cite{scala-workflow},
Scala ContextWorkflow~\cite{InoueAI18},
Scala Computation Expressions~\cite{scala-computation-expressions},
Dsl.scala~\cite{dslscala},
Dotty CPS~\cite{dotty-cps-async}.
In other languages we have:
F\# computation expressions~\cite{PetricekS14},
In particular proof-assistants and dependently typed languages
have an interest for good support of notations for guest languages,
which we can see in Idris’~\cite{Brady14, idris-bang-notation}
Lean’s~\cite{UllrichM22, lean-do-notation},
and Kind’s~\cite{kind-language} notation.
None of them support parallelism.

Further, the following approaches are similar to |ApplicativeDo|:
OCaml’s monadic and applicative let~\cite{ocaml-applicative-let},
Scala avocADO \cite{scala-avocado}, and
Scala parallel-for \cite{scala-parallel-for}.
But these do not support direct-style effect usage, and do not preserve parallelism.

\subparagraph{CPS Translations.}
In general, effects are implemented
by translating to other already known effects. In particular, all effects can be
represented by the continuation effect~\cite{Filinski94}, and thus,
by translating to continuation passing style (CPS)~\cite{Reynolds72, Fischer72}.
However, naive CPS translations introduce so called administrative redexes, e.g.,
expressions containing subexpressions which do not need to be evaluated at run-time,
but can already be optimised by a partial evaluation pass at compile-time.
Eventually, Danvy and Nielsen~\cite{DanvyN03} optimised the CPS-translation
into a first-order, one-pass, compositional translation.

Their trick for achieving an optimal result in one pass is to build optimisations
into the definitions of their translation functions.
We use a similar approach in our translation through the definition of smart constructors
which simplify terms using monad and applicative laws when called.

\subparagraph{Host supporting effects.}
Because effects can be implemented by translation to equally or more powerful effects,
besides giving a denotational semantics modelling a compiler,
there is another approach – that we did not follow –
by forwarding effects to the host language as well.
Then, compile-time translations like ours can be avoided and effects can be implemented in languages as a library, given the host languages has sufficient powerful effects.
Filinski~\cite{Filinski94} studied the implementation of effects in languages with delimited continuations (e.g. |shift| and |reset|).
In such an impure language it is possible to implement so called monadic reflection – a
function taking an effectful function and returning a ``pure'' function.
This is of course only possible
by exploiting the impurity of the host language to implement the effect using delimited continuation.
Later, Forster~\cite{ForsterKLP19} studied the translations between monadic reflection, effect handlers and delimited control.
The approach to extend the underlying virtual machine by support for delimited continuations, which are sufficiently efficient for then implementing effects
as normal libraries is followed by:
the JVM proposal for delimited continuations~\cite{project-loom},
the Haskell proposal for continuation marks~\cite{haskell-cps} and
multicore-ocaml~\cite{multicore-ocaml}.

We are looking for a more general solution for compiling a language,
that works independent of whether the runtime already supports delimited continuations or not.

\subparagraph{Formalisation Techniques.}

To focus on the interesting parts of our formalisation,
we used modern techniques to define features of the guest language
in terms of features of the host language:
In particular, we use
parametric higher order abstract syntax (PHOAS)~\citet{PfenningE88, Hofmann99, Chlipala08}
to inherit binders and capture-avoiding substitution from the host language,
and intrinsically-typed syntax~\citet{Danielsson06, BentonHKM12, AltenkirchCU14, AltenkirchCU14b} to inherit type checking.
The choice of PHOAS implies a limitation of our work,
namely that we can formally only prove theorems about closed terms.
Yet, this is a common restriction and lifting it is subject to future work.

\section{Conclusion}
\label{sec:conclusion}

Existing notations for composing effectful computations fall short on
providing both sequential and parallel composition of effects at the same time.
In this paper, we proposed a notation for mixed sequential/parallel code.
Our notation allows direct-style effects,
a feature that enables the sequentiality or parallelism of the effects to be determined by the structure of the code.
We proved that our compilation preserves the parallelism of the source program and mechanized the proof in Coq.

An interesting next step for this line of research on direct-style notations for effects
is to investigate how to cover
more programming language features such as loops and branches,
to integrate effects more seamlessly into the language.
Besides monad and applicative functors,
other effect functors,
such as selectives~\cite{MokhovLMD19, WillisWP20}, comonads~\cite{OrchardM12},
and the theory behind effectful recursion~\cite{ErkokL02}
and generalizations such as arrows~\cite{Hughes00, LindleyWY11}
are promising possibilities.

\bibliography{main.bib}

\end{document}